\documentclass{article}

\usepackage[T1]{fontenc}
\usepackage{lmodern}
\usepackage{microtype}

\usepackage{graphicx}
\usepackage[round]{natbib}

\usepackage{caption}

\usepackage{amsmath, amsthm, amssymb, amsfonts}
\newtheorem{theorem}{Theorem}[section]
\newtheorem{lemma}{Lemma}[section]
\newtheorem{definition}{Definition}[section]

\numberwithin{equation}{section}

\newcommand{\margin}{30mm}
\usepackage[a4paper, left=\margin, right=\margin, top=\margin, bottom=\margin]{geometry}

\usepackage{bbm}

\usepackage{xcolor}
\definecolor{bodytext}{HTML}{2F2C3D}
\definecolor{jade}{HTML}{01848B}
\definecolor{silver}{HTML}{73787A}
\definecolor{eggplant}{HTML}{9061C4}

\usepackage[
    bookmarks=true,
    unicode=true,
    pdftoolbar=false,
    pdfmenubar=false,
    pdffitwindow=true,
    pdfstartview={FitV},
    pdftitle={Fast-excursion limit of the Heston model},
    pdfauthor={Ryan McCrickerd},
    pdfkeywords={fast-reversion limit; random closed sets; barrier options; excursion risk},
    pdflang={en-GB},
    pdfdisplaydoctitle=true,
    bookmarksnumbered=true,
    pdfnewwindow=true,
    colorlinks=true,
    linkcolor=jade,       
    citecolor=eggplant,   
    urlcolor=silver,      
    filecolor=silver      
]{hyperref}

\usepackage{orcidlink}
\usepackage{doi}

\newcommand{\bb}[1]{{\mathbb{#1}}}
\newcommand{\cc}[1]{{\mathcal{#1}}}
\newcommand{\rr}[1]{{\mathrm{#1}}}

\makeatletter
\newcommand{\nestappendixtoc}{%
    \let\l@section\l@subsection
    \let\l@subsection\@gobbletwo
}
\makeatother

\newcommand{\vspaceline}{\vspace{\baselineskip}}
\newcommand{\newpara}{\vspaceline\noindent}
\newcommand{\newparahead}[1]{\vspaceline\noindent\textbf{#1.}}

\title{Fast-excursion limit of the Heston model}
\author{Ryan McCrickerd\,\orcidlink{0009-0006-8496-7594}\\
Barclays and Imperial College London\\
\href{mailto:ryan.mccrickerd@gmail.com}{ryan.mccrickerd@gmail.com}}
\date{}

\usepackage{fancyhdr}
\begin{document}
\color{bodytext}

\maketitle

\begin{abstract}
This article introduces an unconventional model for price processes in finance that emerges from the classical Heston model under Mechkov's `fast-reversion limit'.
This new \emph{fast-excursion} Heston model exhibits instantaneous (i.e.\ fast) `excursions' through an interval of prices at each time, which are invisible to vanilla options but critical for hitting probabilities and continuously monitored exotics.
Theoretically, the model provides a rare example of a non-degenerate limit of stochastic volatility models that \emph{escapes} the Skorokhod topologies.
This leads to a class of interval-valued processes which exist as lifts of subordinated L\'evy processes, through the concept of selections in the theory of random closed sets.
On the practical side, we show how the model can be simulated using price-time parametric representations, and utilise a purpose-built \emph{classical} Heston simulation scheme in order to visualise convergence.
Finally we demonstrate how this model raises hitting probabilities for barrier options considerably (of order 10\% for one-month EURUSD options), due to taking `excursion risk' into account.

\vspaceline
\end{abstract}

\hspace{5mm}\textbf{Keywords:} fast-reversion limit; random closed sets; barrier options; excursion risk

\hspace{5mm}\textbf{Code repository:} \url{https://github.com/rmcrkd/fast-excursion-limit}

\clearpage

\thispagestyle{plain}
\begin{figure}
\centering
    \includegraphics[width=1.0\textwidth]{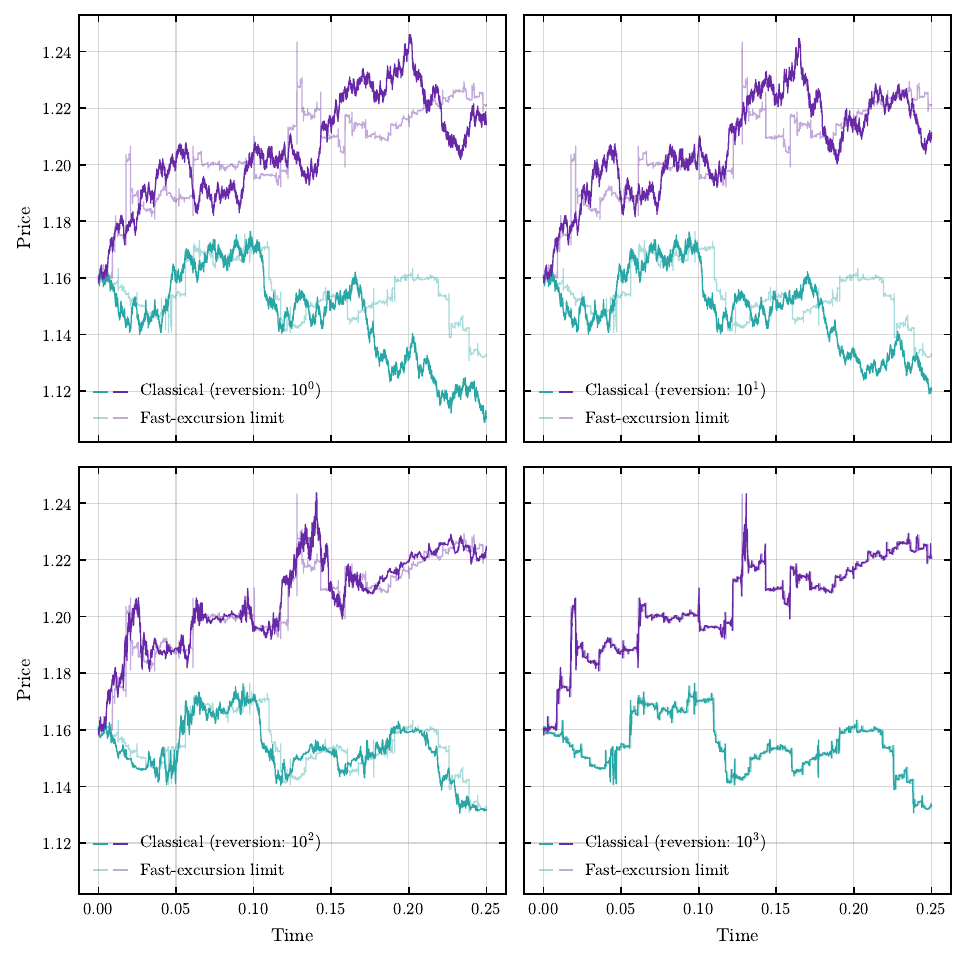}
    \caption{Two classical Heston price paths (dark) converging to their interval-valued\\ `fast-excursion limit' (light).
    Generated by \href{https://github.com/rmcrkd/fast-excursion-limit/blob/main/notebooks/figures/figure-1.ipynb}{figure-1.ipynb}.}
    \label{fig:pathwise-convergence}
\end{figure}

\clearpage
\setcounter{tocdepth}{2}
\tableofcontents
\thispagestyle{plain}

\clearpage
\section{Introduction}\label{sec:introduction}

The classical Heston model \citep{heston_1993} needs no introduction in the contexts of stochastic volatility modelling and derivative pricing.
Moreover it is widely understood that large volatility reversion speeds are required to reconcile certain observations using this model or similar.
Over the past decade, Mechkov's `fast-reversion limit' \citep{mechkov_2015} stands out as a significant contribution towards our understanding and utilisation of models with arbitrarily high reversion speeds.\footnote{
    We allow ourselves to sidestep the `rough volatility' rabbit hole, a class of models also developed over this time, which we could say depend implicitly on \emph{an unbounded continuum of reversionary speeds} \citep{muravlev_2011}.
}
On the practical side, this limiting model has been commercially available over this time (\url{www.numerix.com}), and has recently informed a (publicly available) Heston simulation scheme with outstanding stability \citep{abijaber_2025}.

On the theoretical side, establishing a satisfactory `origin' of this limiting result became one of the main goals of the thesis \citet{mccrickerd_2021}, as it became clear that there is more to it than meets the eye, and that there will be some tangible consequences for derivative pricing.
The main goal of this article is to explain what is going on here, with filtering of and refinements to \citet{mccrickerd_2021} along the way.
Concretely, we now consider \autoref{thm:running-max-limit} to provide the \emph{origin} of Mechkov's result (a \emph{satisfactory} one since this both facilitates visualisations like \autoref{fig:pathwise-convergence} \emph{and} is at the heart of most proofs), and \autoref{thm:fast-excursion-limit} to be the main \emph{consequence} for the classical Heston price process.

\newpara Here is a quick overview of things.
In \citet{mechkov_2015}, Mechkov sends both reversion speed and volatility of volatility (`vol-of-vol') in the Heston model up to infinity at the same rate.\footnote{
    This instigates a violent tussle for the model's instantaneous variance process, while its integral over time proceeds calmly to its (discontinuous) limit, see e.g.\ \citet{mccrickerd_2021} Figure 12.
A similar tussle can be set up for OU processes.
}
And it is shown that, under this limit, European option prices converge to those from a normal-inverse Gaussian (NIG) L\'evy process, which we will henceforth refer to as a fast-reversion Heston (FRH) process or model.
\citet{mccrickerd_2021} extended this marginal result to one between processes, e.g.\ strengthening it to handle (continuously-monitored) path-dependent derivatives.\footnote{
    Separately, \citet{abijaber_2024} extended it to finite-dimensional distributions, noting the wider delicacy in Remark 3.13.
}
In particular, it was shown that the Heston process \emph{does not} converge to an FRH process in any standard topological sense.

This is because, in this fast-reversion limit, the Heston model develops `instantaneous excursions', invisible to European options but meaningful for some exotics.
The emergence of these excursions leads us instead to an \emph{interval-valued generalisation} of the FRH model (with identical parameters), which is a bona fide `excursion process' in the sense of \citet{whitt_2002}, and which we call here the \emph{fast-excursion} Heston (FEH) model.
In the theory of random closed sets \citep{molchanov_2017}, the FRH process is but one `selection process' of the FEH process, with the latter existing on a `hypertopology'.

\newpara The FEH model $\Sigma$ for a price process may be unconventional but it is relatively simple to define; simpler than classical Heston even.
E.g.\ we can construct one by applying elementary operations to a standard 2d Brownian motion $W=(W^0, W^1)$.
As in most stochastic volatility models, we first define a correlated Brownian motion $W^\rho := \sqrt{1 - \rho^2}W^0 + \rho W^1$, then for each time $t\in[0,\infty)$ we can set
\begin{equation}\label{eq:feh-intro}
\Sigma_t = \left\{
    \exp\left(\sigma W^\rho_{x} - \frac12 \sigma^2 x\right) : x\in\left[X_{t-}, X_t\right]
    \right\},\quad X_t = \inf\bigg\{x > 0 : x - \gamma W^1_x > t\bigg\}.\footnote{For brevity, we have fixed a few parameters $s=1, r=q=0$ here as compared with the full \autoref{eq:feh-excursion}.}
\end{equation}
Here $X$ is an inverse Gaussian L\'evy process \citep{applebaum_2009}, and each $\Sigma_t$ can be seen to return a non-trivial \emph{interval of prices} whenever $X$ jumps, i.e.\ whenever $X_t\neq X_{t-}$.
Because such jumps are `unlikely', we find that each $\Sigma_t$ \emph{almost surely contains the single value $S_t=\exp(\sigma W^\rho_{X_t} - \frac12 \sigma^2 X_t )$ only}, which turns out to be that of an FRH process.
This `invisibility' of jumps and excursions for each \emph{fixed} time explains why the FRH and FEH models agree on \emph{European} option prices, and why such prices from the Heston model \emph{do} converge to those from the FRH model, as Mechkov showed.

Along each path, however, jumps and excursions \emph{are not} invisible (despite these occurring almost nowhere).
As a result, the FRH and FEH models \emph{do not} agree on hitting probabilities and certain path-dependent derivative prices like those of touch options.
And it is those prices from the FEH model, exhibiting elevated volatility relative to FRH, to which the Heston model's converge.

\newpara \autoref{sec:fast-excursion-heston} introduces the interval-valued FEH model $\Sigma$ properly, and shows via \autoref{thm:fast-excursion-limit} how it emerges from the classical Heston model under Mechkov's fast-reversion limit.
It also defines a `feature vector' of four real-valued `OHLC' processes arising as projections of $\Sigma$, which help to explain how the model simplifies at parameter boundaries.

Because the finite-dimensional distributions of the FRH and FEH models coincide, vanilla option pricing for the FEH model has been essentially treated already by \citet{mechkov_2015}.
And so in \autoref{sec:touch-pricing} we mainly focus on how single-touch and no-touch option pricing differs between the two models, being arguably the simplest pricing task subject to excursion risk, before concluding.

\autoref{app:background-models} provides an overview of background models like Mechkov's parameterisation of the Heston model and its FRH limit.
Unlike the FEH model, these are all real-valued and c\`adl\`ag, mostly continuous.
Although these details are largely contextual and separate to what follows, viewing the Heston model through the lens of random ODEs as in \autoref{eq:heston-random-ode} is fundamental to both intuition (e.g.\ enabling \autoref{fig:pathwise-convergence}) and proofs relating to convergence.

\newpara This article draws upon theory and limits established in \citet{mccrickerd_2021}, the presentation of which was heavily influenced by Whitt's excursion and fluctuation topologies $\rr{E}$ and $\rr{F}$ \citep{whitt_2002}.
\autoref{thm:fast-excursion-limit} here has since benefited from the theory of random sets in the sense of \citet{molchanov_2017}, on a new space $\rr{G}$ of `graphs'.
An introduction to random (closed) sets is provided in \autoref{app:random-closed-sets}, with derivative pricing in mind.
Things from \autoref{sec:ohlc-processes} onwards do not feature in \citet{mccrickerd_2021} at all.

\section{Fast-excursion Heston}\label{sec:fast-excursion-heston}

This section is in four parts: 1.\ introduces the interval-valued FEH model $\Sigma$ properly in \autoref{def:fast-excursion-heston}, and clarifies how this generalises its FRH namesake through the concept of `selections'; 2.\ shows, via \autoref{thm:fast-excursion-limit}, how this model emerges from the classical Heston model under Mechkov's fast-reversion limit; 3.\ defines a feature vector of four real-valued OHLC processes that arise as projections of the FEH model; and 4.\ uses these OHLC processes to explain simplifications of the model at parameter boundaries.

The notation used for the most important objects in this section is collected in \autoref{tab:notation}.

\begin{table}[ht]
\begin{center}
\begin{tabular}{ll}
 Symbol(s) & Description \\ [0.5ex]
 \hline
 $\bb{G}$ & Price-time domain $\bb{R}\times[0,\infty)$ \\
 $\rr{G}$ & Family of closed subsets of $\bb{G}$; `graphs' \\
 $\cc{G}$ & Effros $\sigma$-algebra on $\rr{G}$ \\
 $W=(W^0,W^1)$ & 2d Brownian motion \\
 $X$ & Inverse Gaussian L\'evy process \\
 $(Z,Y)$ & FEH price-time parametric process \\
 $\Sigma$ & FEH model \\
 $O,H,L,C$ & FEH open, high, low, close processes
\end{tabular}
\end{center}
\caption{Main objects introduced in \autoref{sec:fast-excursion-heston}.}
\label{tab:notation}
\end{table}

\subsection{Model definition and selections}\label{sec:model-definition}

Our FEH model $\Sigma$ will be constructed from the same ingredients as the FRH model from \citet{mechkov_2015}, summarised by \autoref{eq:fast-reversion-heston}.
So for completeness we work from a probability space $(\Omega,\cc{F},\bb{P})$ that supports a 2d Brownian motion $W=(W^0,W^1)$, and have the following fixed \emph{spot, rates, volatility, correlation} and \emph{vol-of-vol} parameters
\[
    s\in[0,\infty),\quad r,q\in\bb{R},\quad \sigma\in[0,\infty),\quad \rho\in[-1,1],\quad \gamma\in[0,\infty).
\]

While we have thus far described our model as an \emph{interval-valued process} (because it is easiest to build intuition this way), this will not be immediately clear from \autoref{def:fast-excursion-heston}.
We should rather think of its trajectories (more simply and generally) as \emph{sets of price-time points} in $\bb{G}:=\bb{R}\times[0,\infty)$, or graphs, and the interpretation as a process over $[0,\infty)$ will follow.
Relevant examples covered by such graphs are the (finite) collection of points used to make \autoref{fig:pathwise-convergence}, or the `completed graphs' of c\`adl\`ag paths \citep{whitt_2002}.
They are formalised by the state space $(\rr{G},\cc{G})$, detailed in \autoref{app:random-closed-sets}.
In summary: $\rr{G}$ is the family of closed subsets of our `carrier space' $\bb{G}$, and $\cc{G}$ is its `Effros' $\sigma$-algebra.

\newparahead{The model} We first introduce a continuous real parametric process $(Z,Y)=\{(Z_x, Y_x)\}_{x\in[0,\infty)}$, representing a price $Z$ at time $Y$, defined by
\begin{equation}\label{eq:feh-fluctuation}
     Z_x := s\exp\bigg((r - q) Y_x +  \sigma W^\rho_x - \frac12 \sigma^2x\bigg),\quad
     Y_x := \max\bigg\{x' - \gamma W^1_{x'}: x'\in[0, x]\bigg\}.
\end{equation}

This `price-time' parametric process $(Z, Y)$ will come to the fore several times in this article, for both theoretical and practical reasons.
For now notice how similar $Z$ is to the Black-Scholes model, summarised by \autoref{eq:black-scholes}, indeed the two coincide on the zero vol-of-vol boundary $\gamma=0$, since then $Y_x=x$.
This parametric process is central to our model definition as follows.

\begin{definition}[Fast-excursion Heston]\label{def:fast-excursion-heston}
    The fast-excursion Heston model is the random closed set $\Sigma:(\Omega,\cc{F},\bb{P})\to(\rr{G},\cc{G})$ traced out by the price-time process $(Z,Y)$, i.e.
    \begin{equation*}
        \Sigma := \big\{(Z_x, Y_x): x\in[0,\infty) \big\}.
    \end{equation*}
\end{definition}

Here \emph{random closed set} is used in the sense of \citet{molchanov_2017} Definition 1.1.1''.
The well-definedness of $\Sigma$ as such is deferred in order to first clarify how this model subtly generalises the FRH model.
To this end, we first project $\Sigma$ onto a compact interval $[\tau, T]$ of times, to obtain a \emph{random compact interval} of prices $\Sigma_{\tau,T}$, which may be written as
\begin{equation}\label{eq:feh-interval}
     \Sigma_{\tau,T} := \bigg\{Z_x : x\in\left[X_{\tau-},X_T\right] \bigg\}, \quad X_t := \inf\bigg\{x > 0 : x - \gamma W^1_x > t\bigg\}.
\end{equation}

Here $X$ is the IG L\'evy process mentioned in \autoref{sec:introduction}, and as usual $ X_{\tau-}:=\lim_{t\uparrow \tau} X_t$ and $ X_{0-}:= X_0=0$.
This representation follows from $[X_{\tau-},  X_T]$ being precisely the set of $x$ for which $Y_x\in[\tau,T]$, i.e.: $X$ and $Y$ are right inverses of one another, e.g.\ satisfying $X_{Y_x}=x$ and $Y_{X_t}=t=Y_{X_{t-}}$.
The case of $ \Sigma_t =  \Sigma_{t,t}$ is similarly a random compact interval, given in full by
\begin{equation}\label{eq:feh-excursion}
     \Sigma_t := \left\{
    s \exp\left((r - q)t + \sigma W^\rho_{x} - \frac12 \sigma^2 x\right) : x\in\left[ X_{t-},  X_t\right]\right\},
\end{equation}
where we have used the fact that $Y_x=t$ for all $x\in[X_{t-},X_t]$.

From \autoref{eq:feh-excursion} we understand that $\Sigma$ is equivalently a compact interval-valued process in the sense of \citet{molchanov_2017} Chapter 5, which we write as $\Sigma=\{\Sigma_t\}_{t\in[0,\infty)}$.
Moreover we see that the (real-valued and c\`adl\`ag) fast-\emph{reversion} Heston model $S$ from \autoref{eq:fast-reversion-heston} is a \emph{selection process} of $\Sigma$, i.e.\ $S_t\in\Sigma_t$ a.s., since each $S_t$ is located in $\Sigma_t$ by selecting $x=X_t$ in \autoref{eq:feh-excursion} specifically.\footnote{Selecting other $x\in[ X_{t-},  X_t]$ provides other selection processes.
Three others will be introduced in \autoref{sec:ohlc-processes}.}

The generalisation (a lift) from $S$ to $\Sigma$ is subtle, however.
On one hand, $X$ being strictly increasing ensures $\Sigma_{\tau,T}$ from \autoref{eq:feh-interval} provides a rich interval of prices for any $T>\tau$, influenced by the excursions evident in \autoref{fig:pathwise-convergence}.
But on the other, since $X_t=X_{t-}$ a.s., each $\Sigma_t$ from \autoref{eq:feh-excursion} considered in isolation is actually indistinguishable from the random singleton $\{S_t\}$.

\newparahead{Well-definedness} We need to show that the FEH model $\Sigma$ in \autoref{def:fast-excursion-heston} is a well-defined random closed set.
Under the additional assumption that $(\Omega,\cc{F},\bb{P})$ is complete, then the fundamental measurability theorem applies in full to $\Sigma$, and equivalently to the set-valued process $\{\Sigma_t\}_{t\in[0,\infty)}$.\footnote{
    This explains \emph{why we define} $\Sigma$ as a random closed set; to minimise room for ambiguity.
E.g.\ if we start from a general (closed) set-valued process defined like in \autoref{eq:feh-excursion}, the resulting $\Sigma$ need not be closed in $\bb{G}$, let alone suitably measurable.
So we end up in the same place, wanting $\Sigma$ to be a random closed set, with all the benefits.
}
See \citet{molchanov_2017} Theorem 1.3.3 and Chapter 5.

For this we need 1.\ $\Sigma\in\rr{G}$ a.s., i.e.\ for $\Sigma$ to be a \emph{closed subset} of $\bb{G}$ a.s., and 2.\ measurability of the map $\Sigma:(\Omega,\cc{F})\to(\rr{G},\cc{G})$, i.e., for all $A\in\cc{G}$,
\[
    \Sigma^{-1}(A):= \{\omega\in\Omega:\Sigma(\omega)\in A\} \in \cc{F}.\footnote{
    Technically we also want the carrier space $\bb{G}$ to be `locally compact Hausdorff second countable' (LCHS), thus Polish. $\bb{G}$ is indeed \citep{molchanov_2017}.
Being LCHS simplifies matters, in particular leading to the equivalence of the Effros $\sigma$-algebra $\cc{G}$ with the Borel $\sigma$-algebra generated by the Fell `hit-and-miss' topology, as discussed in \autoref{app:random-closed-sets}.
}
\]

$\Sigma$ being closed in $\bb{G}$ follows from the continuity of $(Z,Y)$ along with $Y$ being non-decreasing, with $Y_0=0$ and $Y_x\to\infty$ a.s.\ as $x\to\infty$.
Regarding measurability we can do better.
We can consider $\Sigma$ as a composition of maps $\Sigma=m_2\circ m_1\circ m_0$ starting from $(\Omega,\cc{F})$ with
\[
    m_0 = W,\quad m_1: W\mapsto(Z,Y),\quad m_2:(Z,Y)\mapsto\Sigma.
\]

We can ignore $m_0$ by w.l.o.g.\ assuming $(\Omega,\cc{F},\bb{P})$ is the canonical space supporting $W$.
It is straightforward to see that $m_1$ is continuous, and the continuity of $m_2$ follows from \autoref{lem:attouch-wets}, as utilised in \autoref{thm:fast-excursion-limit}.
As a composition of Borel measurable functions, $\Sigma$ is measurable.

Having $\Sigma$ as a well-defined random closed set ensures a well-defined distribution $\mu_{\Sigma} := \bb{P}\Sigma^{-1}$ on $(\rr{G},\cc{G})$.
Other statements e.g.\ regarding random compact intervals and singletons follow similarly.

\newparahead{Discussion} An interval-valued process defined like in \autoref{eq:feh-excursion} first appeared in the thesis \citet{mccrickerd_2021}.
The presentation there was heavily inspired by Whitt's excursion topology $\rr{E}$, and related fluctuation topology $\rr{F}$.
See \citet{whitt_2002} Chapter 15.
Indeed the $\rr{E}$ in FEH pays homage, and our parametric representation $(Z,Y)$ lives in $\rr{F}$ (actually the equivalence class of such processes which induce $\Sigma$).\footnote{
    See \citet{abijaber_2026} and references therein for some history and developments of these spaces.
In particular the so-called `decorated c\`adl\`ag paths' $\mathfrak{D}$ and underlying space $D^\circ$ of parametric representations are akin to Whitt's $\rr{E}$ and $\rr{F}$.
}

We now consider the excursion space $\rr{E}$ (as Whitt later defines it in terms of continuous parametric representations) to contain \emph{special realisations} within the wider (and arguably simpler) space $\rr{G}$.
See the section in \citet{molchanov_2017} on \emph{random closed sets with special realisations}.
It is desirable to define our model in $\rr{G}$ like this, with a distribution $\mu_\Sigma$ on the Effros $\sigma$-algebra $\cc{G}$, because this space is \emph{tremendously} robust.
E.g.\ the Fell topology on $\rr{G}$, which generates $\cc{G}$, is compact and metrisable, see e.g.\ \citet{beer_1993} Theorem 5.1.5.\footnote{
    It is the inclusion of $\varnothing$ in $\rr{G}$ which compactifies this Fell topology, like $\infty$ for the one-point compactification of $\bb{R}$.
In practice we can work with the subset $\rr{G}':=\rr{G}\setminus\varnothing$, the Fell topology of which is `just' Polish, and is specifically induced by the \emph{Attouch-Wets} metric, as used in \autoref{thm:fast-excursion-limit}.
This metric broadly reflects `Hausdorff convergence over compacts' (while elegantly navigating issues at compacts' boundaries).
See e.g.\ \citet{beer_1993} Chapter 3.}

As a bonus, the fundamental hit-and-miss measurable events in $(\rr{G},\cc{G})$ are strikingly relevant to derivative pricing through random closed sets' capacity functionals.
And Choquet's theorem essentially provides classes of `simple' derivatives from which any such distribution $\mu_\Sigma$ on $(\rr{G},\cc{G})$ can be uniquely reconstructed.
I.e.\ it provides (theoretical) classes of (actual) derivatives, quotes for which would make our models completely \emph{model-free}.
See \autoref{app:random-closed-sets} or \citep{molchanov_2017} more generally.

\newpara We have not mentioned filtrations above and they will not be important henceforth either, but notice that $\Sigma_t$ in \autoref{eq:feh-excursion} is adapted to the \emph{time-changed} filtration $\hat{\cc{F}}_t:=\cc{F}_{X_t}$ where $\cc{F}_x:=\sigma(\{W_{x'}\}_{x'\in[0,x]})$.
Where we need to tell such objects apart from same-named counterparts, we mark them with hats $\hat{}$, these being precisely the ones which are not hampered by SDEs and for which \emph{a.s.} limiting results will be possible.
See e.g.\ the use of SDE solution $S$ and random ODE solution $\hat S$ in the proof of \autoref{thm:fast-excursion-limit}.
Such $\hat{}$ processes yield limit theorems that embody Skorokhod's representation theorem (enabling visualisation like in \autoref{fig:pathwise-convergence}), as opposed to Prokhorov's theorem \citep{billingsley_1999}.

\subsection{Convergence under the fast-reversion limit}\label{sec:convergence}

This section provides \autoref{thm:fast-excursion-limit}, which regards the weak convergence of Heston model graphs under Mechkov's fast-reversion limit.
By weak convergence $\Sigma^a\implies\hat \Sigma$ we mean in the sense of \citet{billingsley_1999}.\footnote{
    See e.g.\ the Portmanteau Theorem for a definition and alternative characterisations.
For concreteness, following \citet{billingsley_1999} Section 3, we can take weak convergence $\Sigma^a\implies\hat \Sigma$ to mean $\bb{E}[f(\Sigma^a)]\to\bb{E}[f(\hat \Sigma)]$ for all bounded $f:\rr{G}\to\bb{R}$ which are continuous from the Fell topology on $\rr{G}$.
While this is clearly helpful for derivative pricing (when conducted via expectations), continuity should not be taken for granted for a general derivative payoff.
More on this in \autoref{sec:touch-pricing}.
}
By Heston model graphs we mean the random closed sets $\Sigma^a$ induced by the Heston model $S^a$ as in \autoref{eq:mechkov-heston}.
For convenience, we therefore have the models
\begin{equation}\label{eq:thm-heston}
    \frac{dS^a_t}{S^a_t} = (r - q)dt + \sigma\sqrt{V^a_t} dW^\rho_t,\quad dV^a_t = a \left((1 - V^a_t)dt + \gamma\sqrt{V^a_t}dW^1_t\right),
\end{equation}
where we fix each $(S^a_0,V^a_0)=(s,v)\in(0,\infty)^2$, and corresponding graphs
\begin{equation}\label{eq:thm-graphs}
\Sigma^a:=\{(S^a_t,t):t\in[0,\infty)\}\quad \iff\quad \Sigma^a=\{\Sigma^a_t\}_{t\in[0,\infty)},\quad \Sigma^a_t := \{S^a_t\}.
\end{equation}

Convergence of these graphs $\Sigma^a$ as $a\to\infty$ will be established towards our FEH model $\hat \Sigma$ from \autoref{def:fast-excursion-heston}, which can be summarised as in \autoref{eq:feh-excursion} by its marginal random intervals, where $\hat X$ is the IG L\'evy process of \autoref{eq:feh-interval},
\begin{equation}\label{eq:thm-limit}
    \hat \Sigma_t = \left\{
    s \exp\left((r - q)t + \sigma W^\rho_{x} - \frac12 \sigma^2 x\right) : x\in\left[ \hat X_{t-}, \hat X_t\right]
    \right\}.
\end{equation}

\newparahead{Statement} The formal statement of the result then goes as follows, wherein we recall that $\rr{G}$ is the family of closed subsets of our price-time domain $\bb{G}$, containing our models $\Sigma^a$ and $\hat \Sigma$ a.s., and that $\cc{G}$ is the Effros $\sigma$-algebra on $\rr{G}$, on which the distributions of these models are defined.

\begin{theorem}[Fast-excursion limit]\label{thm:fast-excursion-limit}
    For each $a>0$, let $S^a=\{S^a_t\}_{t\in[0,\infty)}$ be the Heston model as in \autoref{eq:thm-heston},
    and let $\Sigma^a=\{\Sigma^a_t\}_{t\in[0,\infty)}$ be its graph as in \autoref{eq:thm-graphs}, i.e.\ its canonical embedding into $\rr{G}$.
    Let $\hat \Sigma=\{\hat \Sigma_t\}_{t\in[0,\infty)}$ be the fast-excursion Heston model as in \autoref{eq:thm-limit}.
    Then, considering each $\Sigma^a$ and $\hat \Sigma$ as random elements of the measurable space $(\rr{G},\cc{G})$, the weak convergence $\Sigma^a\implies\hat \Sigma$ takes place under Mechkov's fast-reversion limit, i.e.\ as $a\to\infty$.
\end{theorem}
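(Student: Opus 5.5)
We follow the `hat' strategy flagged at the end of \autoref{sec:model-definition}. We first build, on the same canonical space as $W$, versions $\hat S^a$ of the Heston processes with the same law as $S^a$. We then show that their graphs $\hat\Sigma^a$ converge to $\hat\Sigma$ almost surely in the Attouch--Wets metric. Almost sure convergence implies weak convergence on $(\rr{G},\cc{G})$, which gives the theorem.

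\textbf{Step 1: random-ODE representation of Heston in its own clock.} Let $I^a_t:=\int_0^t V^a_u\,du$ and let $A^a$ be its inverse, so that $x\mapsto A^a_x$ maps integrated variance to calendar time. Integrating the variance SDE gives
\[
V^a_t - v = a(t - I^a_t) + a\gamma \int_0^t \sqrt{V^a_u}\,dW^1_u .
\]
By Dambis--Dubins--Schwarz, in the $x$-clock the stochastic integrals $\int\sqrt{V^a}\,dW^{0}$ and $\int\sqrt{V^a}\,dW^{1}$ become a 2d Brownian motion, which we relabel as $W$. The relation then reads $\frac{1}{a}\big(V^a_{A^a_x}-v\big)=A^a_x - x + \gamma W^1_x$. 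Since $dA^a_x/dx = 1/V^a_{A^a_x}$, this is a random ODE for $A^a$ driven by $W^1$:
\[
\frac{dA^a_x}{dx}=\frac{1}{v+a\left(A^a_x-x+\gamma W^1_x\right)},\qquad A^a_0=0,
\]
which should correspond to the paper's \autoref{eq:heston-random-ode}. Set $\hat Y^a:=A^a$ and $\hat Z^a_x:=s\exp\big((r-q)\hat Y^a_x+\sigma W^\rho_x-\tfrac12\sigma^2x\big)$. Then $\hat S^a_{\hat Y^a_x}=\hat Z^a_x$, and $\hat\Sigma^a:=\{(\hat Z^a_x,\hat Y^a_x):x\ge0\}$ has the same law as $\Sigma^a$. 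This step needs $A^a$ to be well defined globally, i.e.\ $I^a_t\to\infty$, which is standard.

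\textbf{Step 2: pathwise convergence of the time-change.} The denominator of the ODE stays positive, so $\hat Y^a_x \ge x-\gamma W^1_x - v/a$ at all times. When $\hat Y^a_x$ falls below that level the denominator is near $0$ and the derivative is huge, pushing $\hat Y^a$ up onto the barrier. When $\hat Y^a_x$ sits strictly above the running maximum, the derivative is $O(1/a)$, so $\hat Y^a$ is nearly flat. A comparison argument for the ODE should therefore show, for every fixed $\omega$,
\[
\sup_{x\le K}\Big|\hat Y^a_x-\max_{x'\le x}\big(x'-\gamma W^1_{x'}\big)\Big|\to0
\]
for every $K$. This is a Skorokhod-reflection-type statement, and I expect it is essentially \autoref{thm:running-max-limit} mentioned in the introduction, which I would invoke. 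Given this, $(\hat Z^a,\hat Y^a)\to(Z,Y)$ locally uniformly in $x$, almost surely.

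\textbf{Step 3: from parametric convergence to graph convergence.} Here we apply \autoref{lem:attouch-wets}, i.e.\ the continuity of $m_2$: locally uniform convergence of continuous parametrisations whose time components are non-decreasing, start at $0$ and diverge gives Attouch--Wets convergence of the traced sets. For this we must check that $\hat Y^a_x\to\infty$ uniformly enough in $a$ that, for any compact time window $[0,T]$, only $x\le K(T,\omega)$ is relevant for every large $a$. This follows from the lower bound $\hat Y^a_x\ge x-\gamma W^1_x-v/a$. The graphs then converge Hausdorff on compacts, so $\hat\Sigma^a\to\hat\Sigma$ almost surely, and the theorem follows.

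The main obstacle is Step 2, the uniform pathwise convergence of the ODE solution to the running maximum. The ODE is singular, since its right-hand side blows up as $a\to\infty$, so we need careful upper and lower barriers. The upper barrier is the harder one: we must show $\hat Y^a$ does not overshoot the running maximum by more than $o(1)$. To get it I would compare with the level $\max_{x'\le x}(x'-\gamma W^1_{x'})+\varepsilon$, above which the slope is at most $1/(a\varepsilon)$, and use continuity of $W^1$ on $[0,K]$.
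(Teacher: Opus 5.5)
Your proposal is correct and follows the paper's proof essentially step for step. You build random-ODE versions $\hat S^a$ on a single $W$ that are equal in law to $S^a$, obtain a.s.\ locally uniform convergence $\hat Y^a\to\hat Y$ from \autoref{thm:running-max-limit}, use \autoref{lem:attouch-wets} to pass to Attouch--Wets (hence Fell) convergence of the graphs, and conclude since a.s.\ convergence implies weak convergence on $(\rr{G},\cc{G})$.

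Your `relabel as $W$' relies on one unstated fact: the forward random ODE $X'_t=v+a(t-X_t+\gamma W^1_{X_t})$ has a unique solution for every continuous driver, so replacing the $a$-dependent Dambis--Dubins--Schwarz motion by a fixed $W$ preserves the law. The paper imports exactly this from \citet{mccrickerd_2021}. Separately, your barrier argument is a sound self-contained proof of \autoref{thm:running-max-limit}, which the paper instead cites. It combines the slope bound $1/(a\varepsilon)$ above the running maximum plus $\varepsilon$ with the lower bound $\hat Y^a_x\ge \max_{x'\le x}(x'-\gamma W^1_{x'})-v/a$.
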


\noindent\textbf{Proof.} At the heart of this is the `Heston random ODE' from \autoref{eq:heston-random-ode}, and the related convergence result \autoref{thm:running-max-limit}, specialised from \citet{mccrickerd_2021}.
In three steps:

\vspaceline\noindent\textbf{Step 1} (a.s.\ uniform convergence of parametric representations)\textbf{.} Let $\hat S^a$ denote Heston random ODE processes as in \autoref{eq:heston-random-ode}, depending on strictly increasing cumulative variance processes $\hat X^a$, with inverses $\hat Y^a:=(\hat X^a)^{-1}$.
Then the associated graphs $\hat \Sigma^a$ of $\hat S^a$ admit the representations
\begin{equation*}
\hat \Sigma^a:= \{(\hat S^a_t,t):t\in[0,\infty)\} = \{(\hat Z^a_x,\hat Y^a_x):x\in[0,\infty)\},
\end{equation*}
where
\[
\hat Z^a_x:=\hat S^a_{\hat Y^a_x} = s \exp\left((r - q)\hat Y^a_x + \sigma W^\rho_{x} - \frac12 \sigma^2 x\right).
\]
Let $(\hat Z,\hat Y)$ denote the FEH parametric representation from \autoref{eq:feh-fluctuation}.
Now utilising \autoref{thm:running-max-limit}, we obtain $\hat Y^a\to\hat  Y$ a.s.\ uniformly over compacts as $a\to\infty$, and therefore $\hat Z^a\to \hat Z$ similarly.

\vspaceline\noindent\textbf{Step 2} (a.s.\ convergence on the Fell topology)\textbf{.} Having $(\hat Z^a,\hat Y^a)\to(\hat Z,\hat Y)$ a.s.\ uniformly over compacts, use \autoref{lem:attouch-wets} to obtain $\hat \Sigma^a\to\hat \Sigma$ a.s.\ w.r.t.\ the Attouch-Wets distance.
The Attouch-Wets distance is a bona fide metric on the subset $\rr{G}':=\rr{G}\setminus\{\varnothing\}$, inducing its Fell topology.
So we obtain $\hat \Sigma^a\to\hat \Sigma$ a.s.\ on the Fell topology of $\rr{G}'$, and also on the Fell topology of $\rr{G}=\rr{G}'\cup\{\varnothing\}$.

\vspaceline\noindent\textbf{Step 3} (weak convergence on the Effros $\sigma$-algebra)\textbf{.} The Fell topology on $\rr{G}$ generates its Effros $\sigma$-algebra $\cc{G}$, so we obtain the weak convergence $\hat \Sigma^a\implies\hat \Sigma$ on $(\rr{G},\cc{G})$.
Since the processes $\hat S^a$ and $S^a$ share the same distribution, so too do the induced random closed sets $\hat\Sigma^a$ and $\Sigma^a$ on $(\rr{G},\cc{G})$.
So we obtain the weak convergence $\Sigma^a\implies\hat\Sigma$ on $(\rr{G},\cc{G})$ as claimed, completing the proof.\qed

\newparahead{Discussion} \autoref{thm:fast-excursion-limit} differs only slightly from \citet{mccrickerd_2021} Corollary 4.58, which essentially regards weak convergence of the same processes to the same limit, but on a more robust space (as discussed in \autoref{sec:fast-excursion-heston}).
As such it is only step 2 above which changes meaningfully, with the Attouch-Wets metric replacing the notion of Hausdorff convergence over compacts on Whitt's excursion space $\rr{E}$.

Much clarity on these spaces and this result is brought through \autoref{fig:pathwise-convergence}, although notice that what we are visualising here is really the intermediate \emph{a.s.\ }result $\hat \Sigma^a\to\hat \Sigma$ w.r.t.\ the Attouch-Wets distance.
The corresponding weak result $\Sigma^a\implies\hat\Sigma$ cannot be visualised for a single trajectory like this, and is best understood through consequences for derivative prices as in \autoref{sec:touch-pricing}.

\newpara It is natural to ask if \autoref{thm:fast-excursion-limit} holds when processes are restricted to a compact time domain $[0,T]$.
The answer is yes, e.g.\ leading to the convergence of derivative prices over such time horizons as in \autoref{sec:touch-pricing}, but this is \emph{specifically} because of the stochastic continuity of $\hat X$, i.e.\ $\hat X_{T-}=\hat X_T$ a.s.
In general, the Fell topology on $\rr{G}$, the Hausdorff topology on $\rr{E}$ and Skorokhod's M$_2$ topology on $\rr{D}$ are identical in this regard: restrictions and maps thereof like $f\mapsto \sup_{t\in[0,T]}f(t)$ from $\rr{D}$ to $\bb{R}$ are \emph{not} continuous, essentially because time and space are placed on equal footing in these topologies, and we might have a discontinuity or excursion at the end time $T$.
See \citet{billingsley_1999} and \citet{whitt_2002}.

\newpara From a practical perspective notice that in the passage from Heston model $S$ to FEH model $\hat\Sigma$ in \autoref{thm:fast-excursion-limit}, we have lost \emph{two} parameters; $a$ and $v$.
This leaves us with the gold standard \emph{three} volatility-related parameters $\sigma$, $\rho$ and $\gamma$, which preserve their interpretation and effect on volatility surfaces from the Heston model, and can be made time-dependent under extensions of the FEH model.
All of these points were raised first in \citet{mechkov_2015}; they apply equally to the FRH model.

\subsection{OHLC processes}\label{sec:ohlc-processes}

In this section four real-valued processes (with left and right limits) are introduced, which arise as certain projections of the FEH model $\Sigma$.
Each of these is a selection process of $\Sigma$, like the FRH model is, as discussed in \autoref{sec:model-definition}.
Collectively, they provide a `feature vector' of $\Sigma$ in the sense of \citet{hess_1999}.\footnote{
    To provide an example, we can write $\Sigma_t$ as an interval $[L_t, H_t]$.
While $\Sigma_t$ lives on a hyperspace of $\bb{R}$ (it is a random element of the \emph{closed subsets of} $\bb{R}$), each of $L_t$ and $H_t$ rather live on the carrier space $\bb{R}$.
Being a finite vector, we call $(L_t,H_t)$ a feature vector of $\Sigma_t$.
}

It is expected that just one (or max two) of these are sufficient for most applications, like derivative pricing under $\Sigma$ (see \autoref{sec:touch-pricing}).
But mainly they provide convenient means to derive and state certain properties of $\Sigma$, especially how $\Sigma$ is related to and generalises more familiar (real-valued) processes, and how things change under certain parameter regimes (as in \autoref{sec:parameters}).

\newparahead{Definitions} Considering the random interval $\Sigma_{\tau,T}$ of prices from \autoref{eq:feh-interval}, four \emph{open, high, low} and \emph{close} (OHLC) random prices arise from the extrema of the relevant price and time intervals,
\begin{equation}\label{eq:ohlc}
    O_\tau:= Z_{X_{\tau-}},\quad H_{\tau,T}:= \max \Sigma_{\tau,T},\quad L_{\tau,T}:= \min \Sigma_{\tau,T},\quad C_T:= Z_{X_{T}}.
\end{equation}

Allowing high and low prices for a single time also, i.e.\
\begin{equation*}
    H_t:= \max \Sigma_t,\quad L_t:= \min \Sigma_t,
\end{equation*}
leads us to four OHLC \emph{processes}, $O, H, L, C$, e.g.\ $C=\{C_t\}_{t\in[0,\infty)}$.\footnote{
    For the avoidance of doubt, all of these min/maxes are actually achieved.
}
Notice how the (c\`adl\`ag) FRH process from \autoref{eq:fast-reversion-heston} coincides with the close price.
These are all illustrated in \autoref{fig:candlestick}.

\begin{figure}[ht]
\centering
    \includegraphics[width=0.95\textwidth]{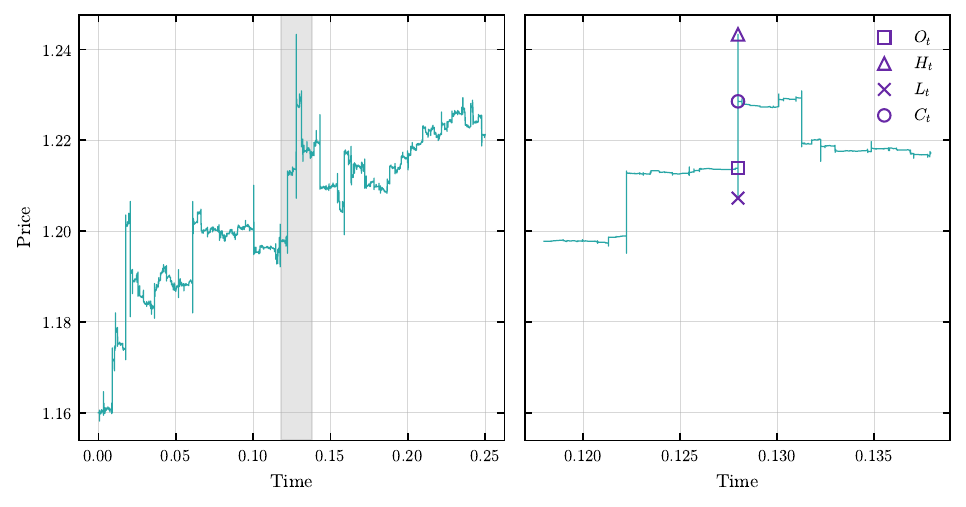}
    \caption{The right panel shows OHLC prices $\{O_t, H_t, L_t, C_t\}\subset \Sigma_t=[L_t,H_t]$.
This right panel is a zoom-in of the shaded region in the left panel.}
    \label{fig:candlestick}
\end{figure}

Like $H$ and $L$, $O$ and $C$ can be defined directly from $\Sigma$ as well, i.e.\ we do not actually need $Z$ or $X$ as used in \autoref{eq:ohlc}.
For this we take Hausdorff limits of $\Sigma$ in time as in \citet{whitt_2002}, and obtain
\begin{equation*}
    \Sigma_t\xrightarrow{t\uparrow \tau} \{O_\tau\},\quad \Sigma_t\xrightarrow{t\downarrow T} \{C_T\},
\end{equation*}
a.s., which we can equivalently write as random singleton left and right limits, e.g.\ $\Sigma_{t+} = \{C_t\}$.\footnote{
    Although these limits do not \emph{follow} from e.g.\ \citet{whitt_2002} Theorem 15.4.1 (these limits are rather presented as conditions for compact graphs), they are straightforward to establish given the continuous parametric representation $(Z,Y)$ of $\Sigma$.
The existence of parametric representations is later assumed in \citet{whitt_2002}, see the discussion following Theorem 15.4.5.
}
This clarifies that the OHLC processes are bona fide (extrema or limiting) projections of $\Sigma$, e.g.\ since $\max\max\Sigma_t=\max\Sigma_t$ (after taking the canonical embedding of $H=\max\Sigma$ back into $\rr{G}$), and $\Sigma_{t++}=\Sigma_{t+}$.
Going the other way, $\Sigma$ can be reconstructed from $H$ and $L$ using $\Sigma_t=[L_t,H_t]$.

\newparahead{Properties} These OHLC processes are fascinating in their own right.
E.g.\ we have $O_t = H_t= L_t = C_t$ a.s.\ (since $X_t=X_{t-}$ a.s.) for each \emph{fixed} $t\in[0,\infty)$, with all having the same exponentiated NIG distribution as the FRH process (avoiding degeneracies at parameter boundaries).
E.g.\ setting $N_t:=\log(\frac{S_t}{s}) - (r-q)t$ for any of $S\in\{O,H,L,C\}$, we have $\mathbb{E}[e^{ip N_t}]=e^{\zeta_N(p)t}$, with characteristic exponent precisely as in \autoref{eq:nig-exponent}, i.e.\
\begin{equation}\label{eq:ohlc-char-exponent}
    \zeta_N(p) = - \frac{\sigma\rho}{\gamma}ip + \frac{1}{\gamma^2}\left(1 - \sqrt{(1 - i\sigma\rho\gamma p)^2 + (\sigma\gamma)^2(p^2 + ip)}\right),
\end{equation}
and $\zeta_N(-i)=0$, i.e.\ $\mathbb{E}[e^{N_t}]=1$, provided $\sigma\rho\gamma\le 1$.
These OHLC processes are moreover equivalent a.e.\ (w.r.t.\ Lebesgue) a.s., have the same left limits e.g.\ $H_{t-}=O_t$ and right limits e.g.\ $L_{t+}=C_t$, same Riemann integrals, same $L^p$ norms, same essential infimums/supremums etc.
Yet they are nevertheless distinct processes, only with the following ordering in general
\begin{equation}\label{eq:ohlc-ordering}
    L \le O\wedge C \le O\vee C \le H.\footnote{
        Concretely, what this means is: for any such $[\tau,T]$, we have $L_t \le O_t\wedge C_t \le O_t\vee C_t \le H_t$ for all $t\in[\tau,T]$ a.s.
    }
\end{equation}

One further regularity property worth pointing out is that from \citet{whitt_2002} Theorem 15.4.1, which bounds the frequency of large excursions: for each $\epsilon>0$, there are a.s.\ only finitely many $t\in[\tau,T]$ for which $|\Sigma_t|:=H_t-L_t > \epsilon$.
(Again this is straightforward to prove using $(Z,Y)$.)
The equivalent statement for $O_t\vee C_t - O_t\wedge C_t >\epsilon$ follows from $C$ being an exponentiated c\`adl\`ag L\'evy process with left limits $C_{t-}=O_t$.

\newparahead{Excursions} We have just alluded to an excursion using a condition like $H_t-L_t > 0$.
But in light of \autoref{fig:candlestick} we should distinguish between \emph{proper} excursions (of the FEH model) and jumps (e.g.\ of its FRH selection).
We should more accurately consider a model like $\Sigma$ to possess \emph{proper} excursions over some $[\tau,T]$ if one of the following conditions holds with non-zero probability,
\begin{equation*}
    \text{upward:}\ \max_{t\in[\tau,T]} (H_t - O_t\vee C_t) > 0,\quad \text{downward:}\ \min_{t\in[\tau,T]}(L_t - O_t\wedge C_t) < 0.
\end{equation*}
While it seems reasonable to conjecture that these inequalities actually hold a.s.\ for \emph{most} parameter sets, we will not pursue this in detail.
Rather, \autoref{sec:parameters} shows some parameter boundaries where proper excursions are \emph{not} present, and \autoref{sec:touch-pricing} presents numerical consequences when they are.

\newparahead{Discounting} In the following section it is helpful to work with \emph{log-discounted} OHLC processes $\log \tilde S_t = \log(\frac{S_t}{s}) - (r-q)t$ for price processes $S\in\{O,H,L,C\}$, mapping conclusions onto $\Sigma$.
The resulting processes $\tilde O,\tilde H,\tilde L,\tilde C$ should be understood as being the OHLC processes for a \emph{discounted} FEH model $\tilde \Sigma$ (e.g.\ $\tilde H_t=\max\tilde\Sigma$) which can be summarised by
\begin{equation}\label{eq:disc-excursion}
    \log \tilde \Sigma_t = \log\left(\frac{\Sigma_t}{s}\right) - (r-q)t = \bigg\{\sigma  W^\rho_x - \frac12 \sigma^2 x: x\in \left[X_{t-}, X_t\right]\bigg\}.\footnote{Here we use set operations like those spelled out in \citet{beer_1993} Chapter 1, e.g.\ $\log\left(\frac{\Sigma_t}{s}\right) = \{\log\left(\frac{a}{s}\right):a\in\Sigma_t\}$.}
\end{equation}
This relies on operator reorderings, e.g.\ $\log\max\Sigma_t=\max\log\Sigma_t$, which we will do without warning.

\subsection{Behaviour at parameter boundaries}\label{sec:parameters}

This section considers behaviour of the FEH model $\Sigma$ at the correlation boundaries $\rho=\pm1$, vol-of-vol boundary $\gamma=0$, and under the limit $\gamma\to\infty$.
Behaviour w.r.t.\ other parameters (spot $s$, rates $r,q$ and volatility $\sigma$) is practically identical to that of the Black-Scholes model.

\newparahead{Correlation} We understand the behaviour of $\Sigma$ w.r.t.\ correlation $\rho$ via Brownian excursions and the OHLC processes, focussing on the discounted process $\tilde\Sigma$ from \autoref{eq:disc-excursion}.
Exposing $\rho$ and using $\bar\rho:=\sqrt{1-\rho^2}$, we have
\begin{equation}\label{eq:log-excursion}
    \log \tilde\Sigma_t = \bigg\{\sigma \bar\rho W^0_x + \sigma\rho W^1_x - \frac12 \sigma^2 x: x\in X^*_t\bigg\},\quad X^*_t:=\left[X_{t-}, X_t\right],
\end{equation}
where $X^*_t$ is only introduced here to compress some expressions later.

By the end we should understand that the term $\sigma\rho W^1_x$ here is fighting for excursions \emph{only} in the direction of $\rho$, because of how $W^1$ and $X$ are related.\footnote{
    This is not really a novelty.
For the FRH model from \autoref{eq:fast-reversion-heston}, the same is true of \emph{jumps} resp.\ excursions.
Both follow from $\max\{x - \gamma W^1_x:x\in X^*_t\}=t$.
See the concluding statements of \citet{mccrickerd_2021} Chapter 3 and Figure 15.
}
So if we kill off the $\sigma \bar\rho W^0_x$ term by moving to the boundary $\rho=-1$, then we find that $\Sigma$ has \emph{downwards} excursions only, as is evident from \autoref{fig:correlation-boundary}.
(Definitively, we have equality in the usual $H \ge O\vee C$ from \autoref{eq:ohlc-ordering}.)
We will focus on this case of $\rho=-1$, which leads to an elegant wider OHLC ordering,
\begin{equation}\label{eq:ohlc-ordering-rho}
    \rho=-1\ \implies\ L \le C \le O = H.
\end{equation}

\begin{figure}[ht]
\centering
    \includegraphics[width=0.95\textwidth]{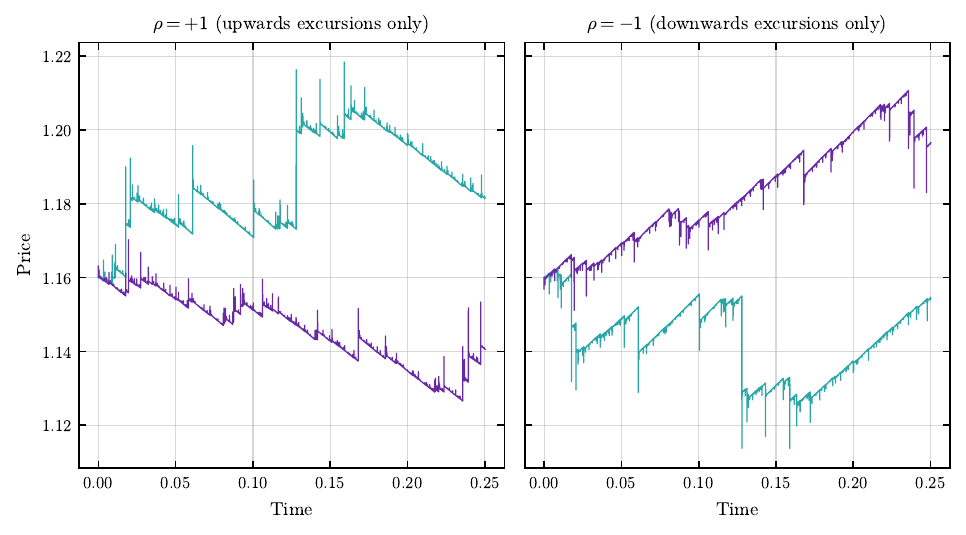}
    \caption{FEH paths on the correlation parameter boundaries $\rho=\pm1$.}
    \label{fig:correlation-boundary}
\end{figure}

To obtain \autoref{eq:ohlc-ordering} we look at excursions of the drifting Brownian $B_x:=x - \gamma W^1_x$ which defines $X$.
With $X$ being the right inverse of $B$, we have $B_x=t$ for both $x=X_{t-}$ and $x=X_t$, but not for all $x\in X^*_t$.
We rather have the following, which are key to bounding excursions of $\tilde\Sigma$,
\begin{equation*}
    \min\big\{B_x:x\in X^*_t\big\}\le t,\quad \max\big\{B_x:x\in X^*_t\big\}=t.
\end{equation*}

First, prioritising $B$ over $W^1$ in \autoref{eq:log-excursion} and setting $\rho=-1$, we get
\begin{equation*}
    \log \tilde \Sigma_t = \bigg\{ \frac{\sigma}{\gamma}B_x - \sigma\left(\frac{\sigma}{2} +  \frac{1}{\gamma}\right)  x : x\in X^*_t\bigg\}.
\end{equation*}

We will deal with the case of $\gamma=0$ (for all $\rho$) shortly, so here assume $\gamma>0$.
Now using the endpoints $x=X_{t-}$ and $x=X_t$ (where $B_x=t$), note the following open/close expressions and ordering
\begin{equation}\label{eq:frh-ordering}
    \log \tilde C_t = \frac{\sigma}{\gamma}t - \sigma\left(\frac{\sigma}{2} +  \frac{1}{\gamma}\right)  X_t \le \frac{\sigma}{\gamma}t - \sigma\left(\frac{\sigma}{2} +  \frac{1}{\gamma}\right)  X_{t-} = \log \tilde O_t.
\end{equation}

Now considering $\log\tilde H_t=\log\max\tilde\Sigma_t=\max\log\tilde\Sigma_t$ and utilising the bound $\max_{x\in X^*_t}B_x=t$, we find $\tilde O$ appearing,
\begin{equation}\label{eq:ordering-analysis}
    \log \tilde H_t \le \frac{\sigma}{\gamma}\max_{x\in X^*_t}B_x - \sigma\left(\frac{\sigma}{2} +  \frac{1}{\gamma}\right)\min_{x\in X^*_t}x = \frac{\sigma}{\gamma}t - \sigma\left(\frac{\sigma}{2} +  \frac{1}{\gamma}\right)X_{t-} = \log \tilde O_t.
\end{equation}

This $\tilde H\le \tilde O$ prohibits positive excursions, and combined with the general $\tilde H\ge \tilde O$ gives $\tilde H= \tilde O$.
Along with the general $\tilde L\le \tilde C$ and $\tilde C\le \tilde O$ from \autoref{eq:frh-ordering}, we get the full ordering in \autoref{eq:ohlc-ordering-rho}.

Separately, similar analysis to that in \autoref{eq:ordering-analysis} gives the following bound for $\tilde L_t$,
\begin{equation*}
    \log \tilde L_t \le \frac{\sigma}{\gamma}\min_{x\in X^*_t}B_x - \sigma\left(\frac{\sigma}{2} +  \frac{1}{\gamma}\right)X_{t-}.
\end{equation*}

The analysis for $\rho=1$ is similar but in place of e.g.\ $(\frac{\sigma}{2} +  \frac{1}{\gamma})\min_{x\in X^*_t}x$ in \autoref{eq:ordering-analysis} we arrive at $\max_{x\in X^*_t}(\frac{\sigma}{2} - \frac{1}{\gamma})x$, which differs depending on the cases of $\frac{1}{\gamma}\gtreqqless\frac{\sigma}{2}$.
The main point holds in that we now have $L = O\wedge C$ (i.e.\ upwards excursion only).
Typical parameters also lead to $O \le C \le H$ as in \autoref{fig:correlation-boundary}, but whether $O \lesseqqgtr C$ holds depends on $\frac{1}{\gamma}\gtreqqless\frac{\sigma}{2}$ in this $\rho=1$ case.

\newparahead{Vol-of-vol} The behaviour of $\Sigma$ w.r.t.\ vol-of-vol $\gamma$ can be understood through the normalised IG L\'evy process $X$ defined in \autoref{eq:feh-interval}.
Recall that $X_t\sim \rr{IG}(t, \frac{t^2}{\gamma^2})$ for $\gamma, t>0$, and by simply setting $\gamma=0$ in \autoref{eq:feh-interval} then $X_t=t$.
So for all $\gamma, t\ge0$ we have $\bb{E}[X_t]=t$ and $\bb{V}[X_t]=\gamma^2 t$.

\begin{figure}[ht]
\centering
    \includegraphics[width=0.95\textwidth]{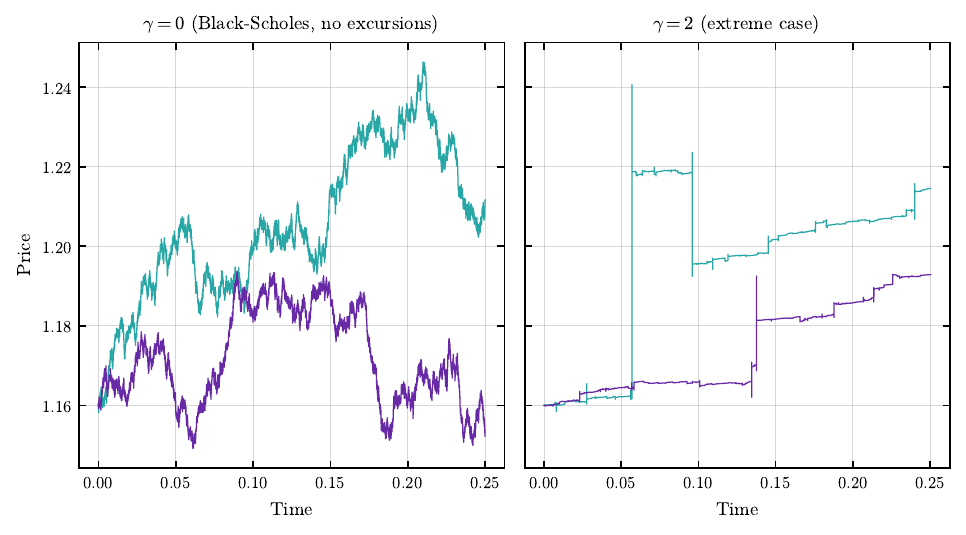}
    \caption{FEH paths on the vol-of-vol parameter boundary $\gamma=0$ and for an extreme case $\gamma=2$.}
    \label{fig:vol-of-vol-boundary}
\end{figure}

Setting $\gamma=0$ and therefore putting $X_t=t$ into \autoref{eq:feh-excursion}, we find
\begin{equation}\label{eq:canonical-bs}
    \gamma=0\ \implies\ \Sigma_t = \left\{
    s \exp\left((r - q)t + \sigma W^\rho_t - \frac12 \sigma^2 t\right) \right\}.
\end{equation}
So the FEH model is now just a random singleton at each time, containing the Black-Scholes price.
Put another way, it \emph{is} the Black-Scholes model after this model's canonical embedding into $(\rr{G},\cc{G})$.
Unexciting but reassuring, and evident in \autoref{fig:vol-of-vol-boundary}.

The actual convergence as $\gamma\to 0$ can be characterised by either establishing convergence of \emph{both} of the high and low processes $H,L$ to Black-Scholes (recall $\Sigma_t=[L_t,H_t]$), or alternatively we can work directly with $\Sigma$, establishing convergence of this to Black-Scholes per \autoref{eq:canonical-bs}.
In either case convergence is a.s.\ and uniform over compacts.

Going the other way $\gamma\to\infty$ is mathematically interesting but (it would appear) less practically relevant.
The behaviour of each $X_t$ is summarised by the convergence in probability to zero, but not in mean (recall $\bb{E}[X_t]=t$ for all $\gamma$) and certainly not in higher-order $L^p$ spaces.
Indeed,
\[
    \forall\epsilon>0\ \ \bb{P}[X_t>\epsilon] \to 0,\ \ \text{but}\ \ \forall p > 1\ \ \bb{E}[X^p_t]\to\infty, \ \ \text{as}\ \ \gamma\to\infty.
\]
Counterpart statements on the \emph{process} $X$ follow from $X_0=0$ and it being strictly increasing.

So $X_t$ starts to spend a lot of time near zero before experiencing a monstrous and practically unrealistic jump, and likewise $\Sigma_t$ stays close to the singleton $\{se^{(r - q)t}\}$ before a large excursion, as is clear from \autoref{fig:vol-of-vol-boundary}.
Equivalently, both $H$ and $L$ stay close to $se^{(r - q)t}$ then suddenly diverge.

It is interesting to note that this degeneracy can be stabilised by working with $\bar X_t:=\gamma^2 X_t$.
This instead converges in distribution to a L\'evy$(0,t^2)$ variable as $\gamma\to\infty$, as in \citet{mccrickerd_2021} Table 1, but this moves further away from practical relevance since now even $\bb{E}[\bar X_t]\to\infty$.

\section{Touch option pricing}\label{sec:touch-pricing}

This section introduces derivative pricing under the FEH model $\Sigma$.
After reviewing how to price vanillas, we focus on the pricing of single-touch and no-touch barrier options, being arguably the simplest which expose and develop intuition for excursion risk.

When providing such a derivative price under the FEH model, we have one clear agenda: \emph{to state the real and unique price to which the classical Heston model's converge under Mechkov's fast-reversion limit}, as a consequence of \autoref{thm:fast-excursion-limit}.\footnote{
    Throughout this section we use $S^a=\{S^a_t\}_{t\in[0,\infty)}$ to denote the Heston model as appearing in \autoref{thm:fast-excursion-limit}.
}
In this way the FEH model becomes a mechanism to help us access a certain boundary of a \emph{classical} (arbitrage-free) pricing framework.\footnote{
    This appears to be the first case of an interval-valued \emph{price} process used to access the boundary of a classical framework.
As opposed to e.g.\ being used to obtain interval-valued \emph{derivative} prices as in conic finance \citep{madan_2016}.
}

We will see that our derivative prices under the FEH model can be written equivalently using the random closed set $\Sigma$, its price-time parametric representation $(Z,Y)$, or the OHLC processes.
The latter are preferred initially since these facilitate the clearest explanation of how prices compare with the FRH model's.
The other representations certainly deserve their place, as will be explained.

\newcommand{\goesto}{{\ \xrightarrow{a\to\infty}\ }}

\newparahead{Vanillas} Here we explain how to price European call options and digital put options under the FEH model.
Since these depend on a single maturity $T>0$ only, we can consider an FEH payoff that depends on \emph{any one} of the OHLC processes, and their prices will agree.
(Recall that $O_T=H_T=L_T=C_T$ a.s.)
Moreover, resulting prices coincide with the FRH model's, since the FRH process coincides with the FEH model's close process $C$ from \autoref{eq:ohlc}.
In this situation (depending on finite-dimensional distributions only), we define FEH derivative prices in terms of the (c\`adl\`ag) close process $C$ by convention, and leverage Mechkov's existing work (\citeyear{mechkov_2015}).

Recall that $N_t:=\log(\frac{C_t}{s}) - (r-q)t$ has characteristic function $\phi_t(p):=\bb{E}[e^{ip N_t}]=e^{\zeta_N(p)t}$, with NIG exponent $\zeta_N$ as in \autoref{eq:ohlc-char-exponent}.
Then as a consequence of \autoref{thm:fast-excursion-limit} we have the following convergence of Heston prices for strike $K>0$ and maturity $T>0$
\begin{equation*}
    V^\text{Heston}_\text{vanilla-call}(K,T)\ \longrightarrow\ V^\text{FEH}_\text{vanilla-call}(K,T) = V^\text{FRH}_\text{vanilla-call}(K,T),
\end{equation*}
which in mathematical terms should be understood as
\begin{equation}\label{eq:vanilla-call}
    \bb{E}[(S^a_T - K)_+] \goesto \bb{E}[(C_T - K)_+] = F - \frac{1}{\pi}\sqrt{FK}\int_0^\infty \frac{\text{Re}\left[e^{-iuk}\phi_T(u-\frac{i}{2})\right]}{u^2 + \frac14}du.
\end{equation}
Lewis's formula from \citet{lewis_2001} Equation 3.11 is used here, with forward $F:=se^{(r-q)T}$ and log-strike $k:=\log(\frac{K}{F})$, and for convenience these are \emph{forward} prices.\footnote{
    Using deterministic rates, this amounts to ignoring a factor of $e^{-rT}>0.995$ in all reported derivative prices.
The main motivation for this is to align the following derivative price expressions with CDFs / hitting probabilities.
}
For digital puts we similarly have
\begin{equation*}
    V^\text{Heston}_\text{digital-put}(K,T)\ \longrightarrow\ V^\text{FEH}_\text{digital-put}(K,T) = V^\text{FRH}_\text{digital-put}(K,T),
\end{equation*}
which, after using $\bb{E}[1_{A}]=\bb{P}[A]$, this time reads
\begin{equation}\label{eq:digital-put}
    \bb{P}[S^a_T \le K]\goesto\bb{P}[C_T \le K] = \frac{1}{2} - \frac{1}{\pi} \int_{0}^{\infty} \text{Re}\left[\frac{e^{-iuk} \phi_T(u)}{iu}\right] du,
\end{equation}
where here we use \citet{lewis_2001} Equation 3.12, as opposed to integrating the NIG PDF as given in \citet{mechkov_2015}.

\citet{mechkov_2015} can be consulted for example implied volatilities arising from call option prices as in \autoref{eq:vanilla-call}.
We have specifically introduced digital puts as in \autoref{eq:digital-put} because these coincide with marginal CDFs of $C$, which is helpful for interpreting touch option prices as follows.

\newparahead{Touches} Now we look at (down-and-in) single-touch options and (up-and-out) no-touch options,\footnote{
    What we call here a single-touch option is specifically a down-and-in-cash-(at-expiration)-or-nothing option in \citet{haug_2007} Section 4.19.5, and as usual we consider forward values of these.
A no-touch is the corresponding up-and-out.
}
which will conveniently give us two more CDFs to go with that from the digital put in \autoref{eq:digital-put}.

Following the outline for vanillas above, for a single-touch option with strike $K\in(0,s]$ and maturity $T>0$, we rather have the following convergence to an FEH price \emph{above} the FRH price,
\begin{equation*}
    V^\text{Heston}_\text{single-touch}(K,T)\ \longrightarrow\ V^\text{FEH}_\text{single-touch}(K,T) \ge V^\text{FRH}_\text{single-touch}(K,T),
\end{equation*}
which mathematically corresponds to `hitting' probabilities,
\begin{equation}\label{eq:single-touch}
    \bb{P}\left[\min_{t\in[0,T]}S^a_t \le K\right] \goesto \bb{P}\left[\min_{t\in[0,T]}L_t \le K\right] \ge \bb{P}\left[\inf_{t\in[0,T]}C_t\le K\right].
\end{equation}
The inequality of hitting probabilities here simply follows from $L_t\le C_t$ a.s.,
which we expect to be strict in cases of practical interest, e.g.\ excluding the boundary examples in \autoref{sec:parameters}.
We call the gap $V^\text{FEH}_\text{single-touch}-V^\text{FRH}_\text{single-touch}$ between these prices \emph{excursion risk} (or premium), given that the FRH and FEH models differ through instantaneous excursions only.\footnote{This gap will correspond to a \emph{vertical} distance between prices in our figures, e.g.\ \autoref{fig:touches}.
Alternatively a \emph{horizontal} distance can be used to define excursion risk instead, giving a better feel for the `average' excursion size.}

For a no-touch option with strike $K\in[s,\infty)$ and maturity $T>0$, we rather have
\begin{equation*}
    V^\text{Heston}_\text{no-touch}(K,T)\ \longrightarrow\ V^\text{FEH}_\text{no-touch}(K,T) \le V^\text{FRH}_\text{no-touch}(K,T),
\end{equation*}
which mathematically corresponds to `missing' probabilities,
\begin{equation*}
    \bb{P}\left[\max_{t\in[0,T]}S^a_t \le K\right] \goesto \bb{P}\left[\max_{t\in[0,T]}H_t \le K\right] \le \bb{P}\left[\sup_{t\in[0,T]}C_t \le K\right].
\end{equation*}

\newparahead{Barriers} From these expressions it should be clear how to price other barrier derivatives under the FEH model, via the OHLC prices from \autoref{eq:ohlc}.
For some additional clarity, essentially combining the examples above, a double knock-out call option with barriers $K_L<K_H$ has price
\[
\bb{E}\bigg[\mathbbm{1}_{A_T} (C_T - K)_+\bigg],\quad A_T := \left\{K_L < \min_{t\in[0,T]}L_t\right\}\cap\left\{\max_{t\in[0,T]}H_t\le K_H\right\}.
\]

\newparahead{Numerical results} Estimated by simulation, \autoref{fig:touches} shows single-touch and no-touch prices for a range of strikes, and maturity $T=0.08$ (near one month), under both FRH and FEH models.
Digital put prices are also shown for 10 equidistant maturities in $(0,0.08]$, which the models agree on.

The parameters used are broadly indicative of a EURUSD market in 2025.
Specifically:
\begin{equation}\label{eq:eurusd-parameters}
    s=1.16,\quad r=0.05,\quad q=0.02,\quad \sigma=0.07,\quad \rho=0.05,\quad \gamma=0.2.
\end{equation}
\autoref{fig:touches-rho} shows how this changes under an extremely negative correlation regime, more indicative of an equity price, wherein we see the influence of larger downward excursions.

\begin{figure}[ht]
\centering
    \includegraphics[width=0.95\textwidth]{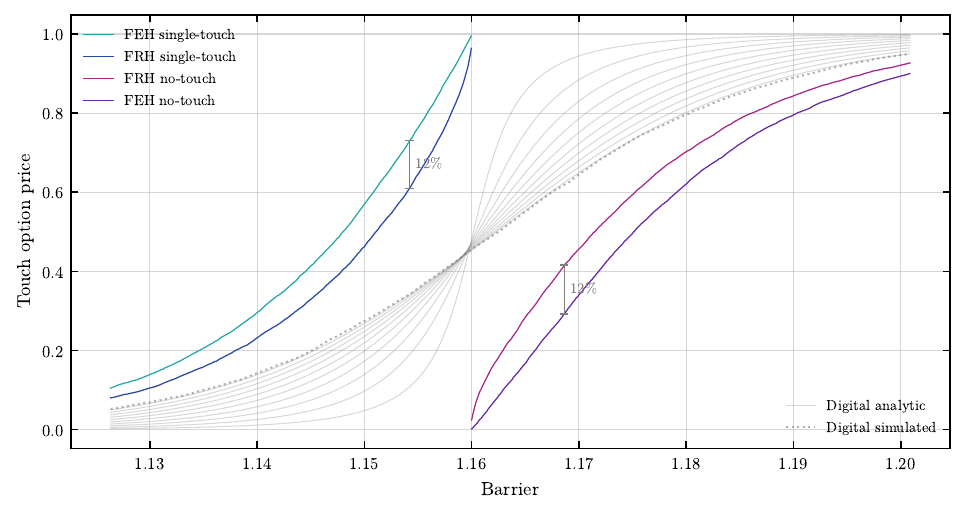}
    \caption{Touch option pricing using parameters from \autoref{eq:eurusd-parameters}.}
    \label{fig:touches}
\end{figure}

\begin{figure}[ht]
\centering
    \includegraphics[width=0.95\textwidth]{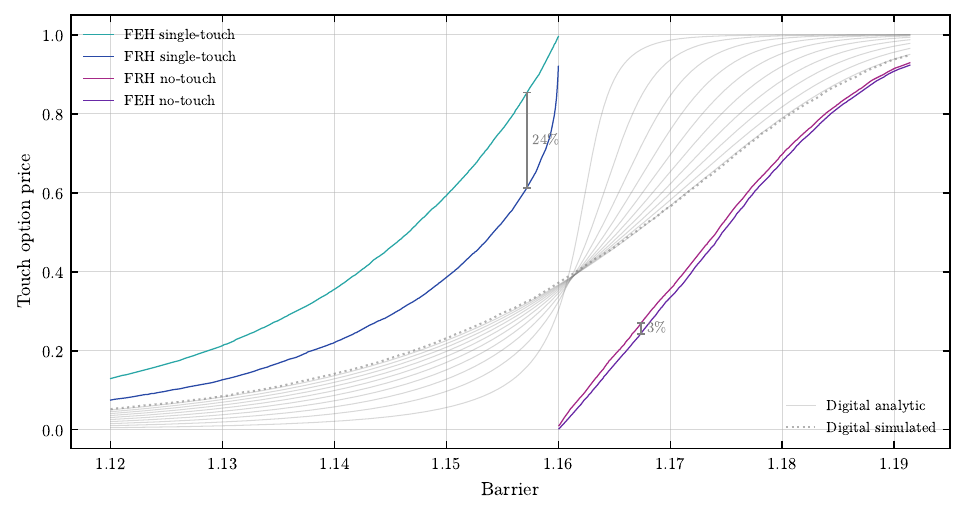}
    \caption{A repeat of \autoref{fig:touches} but using instead $\rho=-0.7$.}
    \label{fig:touches-rho}
\end{figure}

In \autoref{fig:touches-2} we reproduce (approximately) some results from \citet{kudryavtsev_2006} Figure 1.
For this we obtain model parameters $\sigma,\rho,\gamma$ implied by the NIG model, inverting the relationships from \citet{mechkov_2015},\footnote{
    Concretely, assuming the usual constraints $\alpha>|\beta|$, $\alpha>|\beta+1|$ and $\delta>0$, then inverting \citet{mechkov_2015} we arrive at
    \begin{equation*}
        \sigma^2= \frac{\delta^2 + \mu^2}{\delta\sqrt{\alpha^2 - \beta^2}},\quad \rho=-\frac{\mu}{\sqrt{\delta^2 + \mu^2}},\quad \gamma^2 = \frac{1}{\delta\sqrt{\alpha^2 - \beta^2}},
    \end{equation*}
    where $\mu$ is the NIG L\'evy process drift (e.g.\ from \citet{kudryavtsev_2006} Equation 2.2) which preserves $\bb{E}[e^{N_t}]=1$ when it exists, i.e.\
    \begin{equation*}
        \mu = -\delta\left(\sqrt{\alpha^2 - \beta^2} - \sqrt{\alpha^2 - (\beta + 1)^2}\right),
    \end{equation*}
    or equivalently $\mu=-\frac{\sigma\rho}{\gamma}$.
Now using $\alpha=40$, $\beta=6$ and $\delta=11$ from \citet{kudryavtsev_2006}, we arrive at the values in \autoref{eq:kudryavtsev-params}.
} giving
\begin{equation}\label{eq:kudryavtsev-params}
    s=1,\quad r=0.05,\quad q=0,\quad \sigma=0.54,\quad \rho=0.16,\quad \gamma=0.048.
\end{equation}

\begin{figure}[ht]
\centering
    \includegraphics[width=0.95\textwidth]{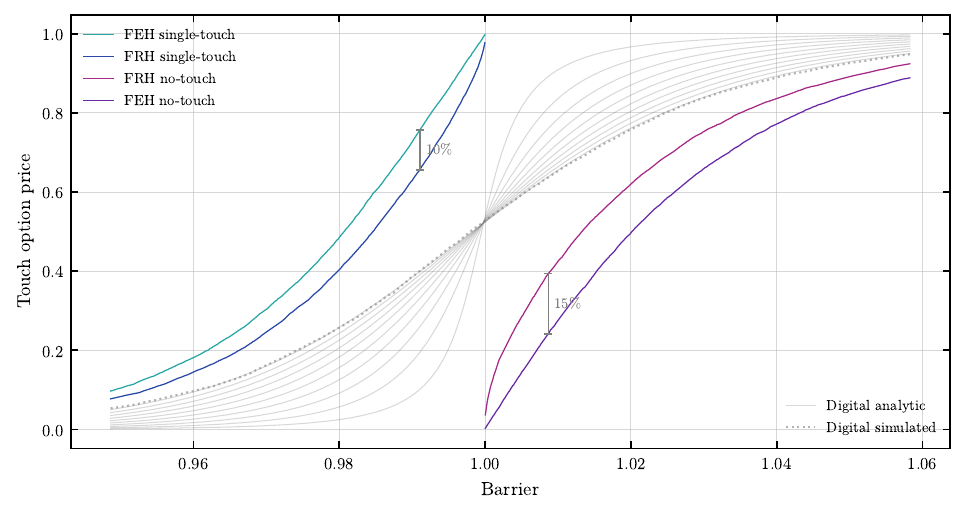}
    \caption{Touch option pricing using parameters from \autoref{eq:kudryavtsev-params}.
Single-touch prices are consistent with the 1 day expiry results from \citet{kudryavtsev_2006} Figure 1.}
    \label{fig:touches-2}
\end{figure}

\autoref{fig:touches} and \autoref{fig:touches-rho} clarify the significance of the excursion risk that is present in the FEH model, showing that it raises hitting probabilities of order 10\%, depending on strike and correlation.

\newparahead{Representations} An FEH price expression in \autoref{eq:single-touch}, in terms of the OHLC processes, feels the most natural when drawing comparisons with classical Heston or FRH counterparts.
But there are many others, e.g.\ for this single-touch we have
\begin{equation}\label{eq:price-representations}
    \bb{P}\bigg[\min_{t\in[0,T]}L_t \le K \bigg] = \bb{P}\bigg[\min\left\{Z_x:Y_x\le T\right\} \le K \bigg] = \bb{P}\bigg[\Sigma\cap \Phi(K,T)\neq\varnothing\bigg].
\end{equation}
Here, along with the first OHLC viewpoint, we have used the price-time parametric representation $(Z,Y)$ instead, and also the random closed set $\Sigma$ with a `hitting set' $\Phi(K,T):=[0,K]\times[0,T]\subset\bb{G}$.

Each of these representations has a well-deserved place.
The final does so because it measures a `hitting family' $\{\Gamma\in\rr{G} : \Gamma\cap \Phi \neq \varnothing \}$, and these are fundamental measurable events in the Effros $\sigma$-algebra $\cc{G}$.
This associated probability (i.e.\ single-touch price) is given by the `capacity functional' of $\Sigma$ evaluated on $\Phi$, and by Choquet's theorem these are sufficient to characterise $\Sigma$.
See \autoref{app:random-closed-sets}.

The central expression deserves its place because it reflects how we estimate this probability, i.e.\ via the simulation of $(Z,Y)$ from \autoref{eq:feh-fluctuation}.
It is straightforward to simulate this process (exactly) e.g.\ on an equidistant grid $x_n=n\delta$ and, mirroring \autoref{thm:fast-excursion-limit}, we get the weak convergence of such a scheme on $(\rr{G},\cc{G})$ as step size diminishes, i.e.\ as $\delta\to0$.
See \autoref{app:simulation}.

\section{Conclusion}

We have introduced a new interval-valued \emph{fast-excursion Heston} (FEH) derivative pricing model, showing how this 1.\ arises from the classical Heston model under Mechkov's fast-reversion limit \citep{mechkov_2015}, and 2.\ raises hitting probabilities / touch-option prices by taking excursion risk into account.

Theoretically, this model provides a rare example of a non-degenerate limit from continuous stochastic volatility models that escapes the Skorokhod topologies, and we are able to explain this limit with \emph{almost sure} results (e.g.\ enabling its visualisation as in \autoref{fig:pathwise-convergence}).\footnote{In this sense our work on limits embodies Skorokhod's representation theorem, as opposed to Prokhorov's theorem \citep{billingsley_1999}.}
In this space where such a.s.\ results are possible, volatility models solve random ODEs as opposed to SDEs.

The FEH model is expressed using the robust theory of random closed sets \citep{molchanov_2017}, in which model trajectories should be thought of as being tremendously simple and intuitive things; merely (closed) sets of price-time points in $\bb{R}^2$.
Within this framework, our set-valued FEH model is a lift of the real-valued fast-\emph{reversion} Heston (FRH) model (an exponentiated NIG L\'evy process) with the FRH selection process being recovered by a projection onto right limits in time.

This model lift can be understood more simply as a generalisation of the composition operator $\circ\rightsquigarrow\bullet$ within the framework of subordinated L\'evy processes.
Concretely, this generalisation relates to set membership:
\[
    \text{FRH model} = Z_{X_t} =: (Z\circ X)_t\ \in\ (Z\bullet X)_t := \{Z_x : x\in[X_{t-},X_t]\} = \text{FEH model}.\footnote{Recall \citet{abijaber_2024} Remark 3.13; this \emph{complete} composition operator $\bullet$ \emph{is} continuous on the Fell topology of $\rr{G}$, unlike $\circ$.}
\]

In the FEH model, $X$ is specifically the IG L\'evy process, but other strictly increasing subordinators (e.g.\ Gamma, $\alpha$-stable) lead to other random closed sets $Z\bullet X$, which induce compact interval-valued processes, and can be considered for derivative pricing just like $Z\circ X$.\footnote{
    To our knowledge, such interval-valued processes $Z\bullet X$, admitting continuous parametric representations $(Z,Y)$ with right inverse $Y:=X^{-1}$, have not been studied.
E.g.\ compare the classic It\^o excursion processes of \citet{bertoin_1996} Section 4.4.
In this setting, $X$ is specifically the \emph{inverse local time of the underlying process} i.e.\ $Z$, as opposed to being free.}
As in \autoref{sec:touch-pricing}, these would agree on vanilla prices because of stochastic continuity, while $Z\bullet X$ would exhibit elevated volatility for exotics due to excursions.

Admittedly it is hard to imagine alternatives being more useful than the FEH model, for two main reasons: 1.\ its three main parameters $\sigma, \rho,\gamma$ are inherited from the classical Heston model and control its implied volatility surface in the same way, and 2.\ the right inverse $Y:=X^{-1}$ is a running maximum of Brownian motion, facilitating exact simulation of the price-time parametric representation $(Z,Y)$.
Hopefully we will be proved wrong on this point.

On the practical side, the FEH model lets us access a certain boundary of classical Heston derivative prices; a boundary where both volatility reversion speed and vol-of-vol have `reached infinity'; a boundary that appears inaccessible to classical PDE and Monte Carlo techniques based on fixed time grids.\footnote{
    E.g.\ if we take the Monte Carlo scheme of \citet{abijaber_2025}, which is practically \emph{designed} to handle this fast-reversion limit (and does so exceedingly well in most respects), hitting probabilities converge instead to the FRH model's as $a\to\infty$.
Such schemes require extending to capture excursions between simulation times, \emph{regardless} of how fine the grid is.
}
Indeed, the simulation schemes used in this article (even for classical Heston), are parametric and as such use adaptive time grids, or \emph{fixed grids in cumulative variance space}.

Moving forward with the FEH model or similar for exotics pricing, a joint simulation scheme for the trio $H_{0,T}$, $L_{0,T}$ and $C_T$ (or some subset) from \autoref{eq:ohlc} would be highly desirable.
This would avoid our simulation scheme using $(Z,Y)$, which is not obviously amenable to vectorisation, and could be considered as an \emph{approximate} simulation scheme for exotics under the \emph{classical} Heston model as well, exactly like that of \citet{abijaber_2025} works for vanillas.

Finally, a comparison of FRH and FEH touch option prices with market quotes will be enlightening.
This will enable us to conclude on whether the excursion risk and exotics premium presented in this article, e.g.\ \autoref{fig:touches}, is a reality in e.g.\ FX or equity markets, which the FEH model or similar may help with.
Such models may be useful in other markets too, e.g.\ to help model and mitigate electricity or interest rate spikes.
E.g.\ compare \autoref{fig:correlation-boundary} with \citet{andersen_2024} Figure 1.

\clearpage
\section*{Acknowledgements}\markboth{Acknowledgements}{Acknowledgements}

The thinking behind this article has taken shape over several years.
It started after being exposed to the FRH model in the Numerix library (\url{www.numerix.com}), around 2015.
This model and the associated Heston limit influenced my PhD, which benefited from Mikko Pakkanen's guidance and correspondence with Eduardo Abi Jaber.
Martin Rasmussen became practically crucial as the project landed in the territory of (probability-free) ODE well-posedness, and I am tremendously grateful to him for extending my visiting status at Imperial College since.
I thank Ward Whitt and more recently Ilya Molchanov for correspondence which has helped with the $\rr{E}$, $\rr{F}$ and $\rr{G}$ spaces.

\section*{Declarations}

\subsection*{Funding}

This work builds on doctoral research supported by the Engineering and Physical Sciences Research Council (EPSRC) Centre for Doctoral Training in Financial Computing and Analytics under Grant EP/L015129/1.
No additional funding was received for the preparation of this article.

\subsection*{Disclosure statement}

The author reports there are no competing interests to declare.
This article represents the views of the author alone and not the views of any firm or institution.

\subsection*{Generative AI use}

Claude Code (Anthropic), using Claude Opus 5, was used for coding assistance and manuscript preparation.
The code that generates this article's figures was developed with it, and it was used for \LaTeX{} tooling, formatting, grammar and spelling checks, and drafting the declarations in this section.
It was not used for the mathematics, the results, or the writing of the article, which are the author's own.
All output was reviewed and revised by the author.

\subsection*{Data availability}

No empirical data were used in this article.
All numerical results are produced by simulation, using the parameters stated in the text.
The code that generates them, including the figures, is openly available at \url{https://github.com/rmcrkd/fast-excursion-limit}.

\phantomsection
\addcontentsline{toc}{section}{References}
\bibliographystyle{abbrvnat}
\bibliography{main}

\clearpage
\begin{appendix}
\phantomsection
\addcontentsline{toc}{section}{Appendices}
\addtocontents{toc}{\protect\setcounter{tocdepth}{2}}
\addtocontents{toc}{\protect\nestappendixtoc}

\section{Background models}\label{app:background-models}

This section provides an overview of existing models which help to see how the FEH model fits in.
The \emph{most} relevant models are the classical Heston model \citep{heston_1993} and NIG L\'evy model \citep{barndorff-nielsen_1994}, which are linked together through Mechkov's FRH model \citep{mechkov_2015}.

All of the models here can be constructed from a fixed probability space $(\Omega,\cc{F},\bb{P})$ that supports a two-dimensional Brownian motion $W=(W^0,W^1)$, and they all leverage a correlated Brownian motion $W^\rho := \sqrt{1 - \rho^2}W^0 + \rho W^1$ where $\rho\in[-1,1]$.
The same notation $S=\{S_t\}_{t\in[0,\infty)}$ is used throughout this section to denote \emph{different} price processes.
All of these models are real-valued and c\`adl\`ag, mostly continuous too.

For the sake of concreteness only, we henceforth imagine that $t=0$ represents the present time, that $\bb{P}$ is not the real-world measure but another under which derivatives will be priced, and that $S$ represents a foreign-exchange rate with fixed spot price $S_0=s\in[0,\infty)$, and fixed domestic and foreign interest rates $r,q\in\bb{R}$.
Here is an example \emph{Black-Scholes} model using these parameters:
\begin{equation}\label{eq:black-scholes}
    S_t = s\exp\left((r - q)t + \sigma W^\rho_t - \frac12 \sigma^2 t \right).
\end{equation}

\subsection{Classical Heston and Mechkov's parameterisation}

We first recall the \emph{classical Heston} model as in \citet{heston_1993},
\[
    \frac{dS_t}{S_t} = (r - q)dt + \sqrt{V_t} dW^\rho_t,\quad dV_t = \kappa(\theta - V_t)dt + \sigma\sqrt{V_t}dW^1_t,\quad (S_0,V_0)=(s,v)\in (0,\infty)^2,
\]
but are keen to move on from this swiftly, in favour of a different parameterisation.
In preparation for a certain limit, \citet{mechkov_2015} introduces a Heston parameterisation that we will call here \emph{Mechkov's Heston} model,
\begin{equation}\label{eq:mechkov-heston}
    \frac{dS_t}{S_t} = (r - q)dt + \sigma\sqrt{V_t} dW^\rho_t,\quad dV_t = a \left((1 - V_t)dt + \gamma\sqrt{V_t}dW^1_t\right).
\end{equation}
The first things to note here are that $\sigma$ has completely changed meaning, now corresponding to the reversion level $\sqrt{\theta}$, and $V$ is now normalised such that e.g.\ $\bb{E}[V_t]\to 1$ as $t\to\infty$.

Broadly we will not care about the starting variance $V_0=v$, and in the fully normalised case of $v=1$ then $\sigma$ represents the `average volatility' of $S$ up to any time.
Indeed the model can be reduced to Black-Scholes with volatility $\sigma$ under both slow and fast-reversion limits.

The non-trivial stability of Mechkov's Heston parameterisation in the limit $a\to \infty$ is clearly what makes it special.
To aid the next transition we note that by solving the SDE for $S$, and writing the SDE for $V$ as an equivalent integral equation, then \autoref{eq:mechkov-heston} becomes
\begin{multline}\label{eq:mechkov-heston-integral}
    S_t = s\exp\left((r - q)t + \sigma \int_0^t\sqrt{V_u}dW^\rho_u - \frac12 \sigma^2 \int_0^t V_u du \right),\\
    V_t = v + a\left(t - \int_0^t V_u du + \gamma\int_0^t \sqrt{V_u}dW^1_u\right).
\end{multline}

\subsection{Time-changed Heston / Heston random ODE}\label{sec:heston-random-ode}

Now we are making our way to a Heston random ODE, the first step of which requires a change of time in the sense of e.g.\ \citet{barndorff-nielsen_2010}, i.e.\ via the Dambis-Dubins-Schwarz theorem \citep{dambis_1965,dubins_1965}.
This e.g.\ allows us to write the It\^o integrals in \autoref{eq:mechkov-heston-integral} in terms of other Brownian motions $B=(B^0, B^1)$ and $B^\rho := \sqrt{1 - \rho^2}B^0 + \rho B^1$.
In particular, we have
\[
M^\rho_t := \int_0^t \sqrt{V_u}dW^\rho_u = B^\rho_{[M^\rho]_t} = B^\rho_{\int_0^t V_udu},
\]
a.s.,\footnote{To be clear, this equation \emph{defines} $B$, e.g.\ $B^0_x:=M^0_{[M^0]^{-1}_x}$ can be shown to be a Brownian motion.} and prioritising $B$ over $W$ in \autoref{eq:mechkov-heston-integral} gives us the \emph{time-changed Heston} model,\footnote{
    While this (time-)change in perspective may feel awkward, it is important to appreciate that Wolfgang Doeblin was working with such equations for diffusions in 1940 \emph{before} the introduction of It\^o integrals and SDEs.
Doeblin died tragically in 1940, with some of his work discovered as late as 2000.
See \citet{bru_2002} and \citet{swishchuk_2016} for more details.
}
\begin{equation}\label{eq:time-changed-heston}
    S_t = s\exp\left((r - q)t + \sigma B^\rho_{\int_0^t V_u du} - \frac12 \sigma^2 \int_0^t V_u du \right),\quad V_t = v + a\left(t - \int_0^t V_u du + \gamma B^1_{\int_0^t V_u du}\right).
\end{equation}

The connection between Brownian motions $W$ and $B$ depends on model parameters, so in order to work with a fixed Brownian motion as we \emph{vary} model parameters (especially as $a\to\infty$) we simply consider \autoref{eq:time-changed-heston} driven by $W$ instead.
Writing $X_t:=\int_0^t V_udu$ so that $X'_t:=\frac{d X_t}{dt} = V_t$ a.s., we arrive at the \emph{Heston random ODE} model
\begin{equation}\label{eq:heston-random-ode}
    S_t = s\exp\left((r - q)t + \sigma W^\rho_{X_t} - \frac12 \sigma^2 X_t \right),\quad X'_t = v + a\bigg(t - X_t + \gamma W^1_{X_t}\bigg),\quad X_0=0.
\end{equation}

Although the mainstream SDE-based models introduced thus far are well understood to have unique strong solutions, with this shift to ODEs things should feel unclear.
An unexpected finding of \citet{mccrickerd_2021} however is that this Heston random ODE has a unique strong solution too, so in particular the distribution of $S$ is equivalent for all models since Mechkov's Heston.\footnote{
    This equivalence is summarised by \citet{mccrickerd_2021} Theorem 4.14.
The resulting random ODE turns out to be incredibly robust.
E.g.\ the corresponding probability-free `Heston ODE' $x' = f(t, x):= v + a\left(t - x + \gamma w_1(x) \right)$ with $x(0)=0$ has a unique solution \emph{for all} continuous $w_1:\bb{R}\to\bb{R}$ starting from zero, due to each $f(\cdot,x)$ being strictly increasing.
}

Regarding these random ODE solutions, a peculiar limiting result of \citet{mccrickerd_2021} was the following, stated here in the form we will use it; in terms of $Y$ as opposed to $X$, and in this article's notation.

\begin{theorem}[Inverse running max limit]\label{thm:running-max-limit}
    For each $a>0$, suppose that $X^a=\{X^a_t\}_{t\in[0,\infty)}$ solves the random ODE in \autoref{eq:heston-random-ode}, and let $Y^a=\{Y^a_x\}_{x\in[0,\infty)}$ be its inverse $Y^a:=(X^a)^{-1}$.
Let $ Y=\{ Y_x\}_{x\in[0,\infty)}$ be defined as the following running maximum
    \begin{equation*}
      Y_x := \max\bigg\{x' - \gamma W^1_{x'}: x'\in[0, x]\bigg\}.
    \end{equation*}
    Then the a.s.\ convergence $Y^a\to Y$ takes place uniformly over compacts as $a\to\infty$.
\end{theorem}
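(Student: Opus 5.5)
The plan is to turn the statement into a purely deterministic, pathwise estimate. I will fix an $\omega$, write $B_x := x - \gamma W^1_x$ so that $Y_x = \max_{x'\in[0,x]} B_{x'}$, and rewrite the Heston random ODE of \autoref{eq:heston-random-ode} as an ODE for the inverse $Y^a$. I rely on the well-posedness from \citet{mccrickerd_2021} recalled above: $X^a$ is $C^1$, strictly increasing and unbounded, which is also used in Step 1 of \autoref{thm:fast-excursion-limit}. Then $Y^a$ is defined on all of $[0,\infty)$ and satisfies $X^a_{Y^a_x}=x$. Evaluating the random ODE at $t=Y^a_x$ gives
\begin{equation*}
    \frac{dY^a_x}{dx} = \frac{1}{(X^a)'(Y^a_x)} = \frac{1}{v + a\left(Y^a_x - B_x\right)},\quad Y^a_0=0,
\end{equation*}
with the denominator strictly positive. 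I will squeeze $Y^a$ between $Y - v/a$ and $Y + \epsilon_a$ with $\epsilon_a\to0$, uniformly on $[0,K]$ for each $K>0$.

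\textbf{Lower bound.} Positivity of $(X^a)'$ gives $v + a(Y^a_x - B_x) > 0$, i.e.\ $Y^a_x > B_x - v/a$ for every $x$. Since $Y^a$ is non-decreasing, for $x'\le x$ we get $Y^a_x \ge Y^a_{x'} > B_{x'} - v/a$. Taking the maximum over $x'\in[0,x]$ yields $Y^a_x \ge Y_x - v/a$ for all $x\ge0$.

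\textbf{Upper bound.} Fix $\epsilon>0$ and $x\in[0,K]$. If $Y^a_x \le Y_x+\epsilon$ there is nothing to prove. Otherwise let $x_0 := \sup\{x'\le x : Y^a_{x'} \le Y_{x'}+\epsilon\}$; this exists since $Y^a_0=Y_0=0$, and continuity gives $Y^a_{x_0}= Y_{x_0}+\epsilon$. On $(x_0,x]$ we have $Y^a > Y+\epsilon \ge B+\epsilon$, so the derivative above is at most $1/(v+a\epsilon)\le 1/(a\epsilon)$. Since $Y$ is non-decreasing,
\begin{equation*}
    Y^a_x - Y_x \le Y^a_{x_0} + \frac{x-x_0}{a\epsilon} - Y_{x_0} \le \epsilon + \frac{K}{a\epsilon}.
\end{equation*}
Choosing $\epsilon = a^{-1/2}$ gives $\sup_{x\in[0,K]}|Y^a_x - Y_x| \le \max\{v/a,\,(1+K)a^{-1/2}\}\to 0$. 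This holds for every continuous path of $W^1$, hence almost surely, and uniformly over compacts.

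The main obstacle is not these estimates, which are elementary once the inverse ODE is in hand, but justifying the change of variables itself. That requires $X^a$ to be $C^1$, strictly increasing with $(X^a)'>0$ everywhere, and $X^a_t\to\infty$ as $t\to\infty$, so that $Y^a$ is a $C^1$ bijection of $[0,\infty)$. I would take these from the well-posedness result of \citet{mccrickerd_2021} (Theorem 4.14) quoted in the footnote. If that citation does not cover them, I would argue them directly: the right-hand side $f(t,x)=v+a(t-x+\gamma W^1_x)$ is strictly increasing in $t$, so after $X^a$ first becomes positive its derivative cannot hit zero before $t$ increases. Finally, I would check that $v>0$ is used only in the lower bound, where it enters only through the harmless $v/a$ term.
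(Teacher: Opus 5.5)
Your two-sided squeeze works, but one claim it rests on is false, and it is the one you flag as the main obstacle: that $(X^a)'>0$ everywhere, so that $Y^a$ is a $C^1$ bijection. By the construction in \autoref{sec:heston-random-ode}, $(X^a)'$ plays the role of, and has the law of, the CIR variance $V^a$ in \autoref{eq:thm-heston}. Its Feller condition reads $a\gamma^2\le 2$. So for $\gamma>0$ and $a>2/\gamma^2$, which covers the whole regime $a\to\infty$, $V^a$ hits zero with positive probability, and at such times $Y^a$ has a vertical tangent. The paper anticipates this in \autoref{app:simulation}: the inverse ODE is asserted only ``whenever $Y'$ exists'', and $f\vee\varepsilon$ is introduced because $f$ may approach zero. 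Neither the citation nor your fallback argument (that the derivative cannot hit zero) can therefore deliver strict positivity.

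What does hold is enough for your argument:
\begin{itemize}
\item $(X^a)'\ge0$. This is the content of the bound $0\le\varphi\le\overline\varphi$ recalled in the proof of \autoref{lem:ode-convergence}.
\item The zero set of $(X^a)'$ has empty interior. If $(X^a)'=0$ on an interval, then $X^a\equiv c$ there and $f(t,c)=0$ on it, contradicting strict monotonicity of $f(\cdot,c)$. Hence $X^a$ is strictly increasing.
\item $X^a$ is unbounded. If $X^a\le M$, then $(X^a)'(t)\ge v+a(t-\max_{[0,M]}B)\to\infty$, a contradiction.
\end{itemize}
Hence $Y^a$ is a continuous increasing bijection of $[0,\infty)$, differentiable with $(Y^a)'_x=1/(v+a(Y^a_x-B_x))$ wherever this denominator is positive. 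Your lower bound then goes through with $\ge$ in place of $>$, still giving $Y^a\ge Y-v/a$. Your upper bound is untouched: on $[x_0,x]$ you have $Y^a\ge Y+\epsilon\ge B+\epsilon$, so the denominator is at least $v+a\epsilon>0$ there and the mean value theorem applies.

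With that repair the proof is complete, and it takes a genuinely different route from the paper. The paper gives no argument of its own here; it specialises Corollary~4.49 of \citet{mccrickerd_2021}, a more general statement built on the thesis' ODE well-posedness and comparison machinery. Your direct squeeze of the inverse ODE gives a short, self-contained and entirely deterministic proof with an explicit rate, $\sup_{x\in[0,K]}|Y^a_x-Y_x|\le\max\{v/a,(1+K)a^{-1/2}\}$. This holds for every continuous driving path and independently of $\gamma$. The thesis route instead gains generality, and shares its foundations with the uniqueness and continuity results the paper reuses in \autoref{lem:ode-convergence} and \autoref{app:simulation}.
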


This limit $ Y$ is, in turn, the right inverse of an inverse Gaussian L\'evy process, introduced shortly in \autoref{eq:nig-mean-variance}.
See \citet{mccrickerd_2021} Corollary 4.49 for a more general statement of this result, and Example 3.19 and Figure 12 there for (probability-free) intuition.

\subsection{Fast-reversion Heston / Normal-inverse Gaussian}\label{sec:frh}

The fast-reversion Heston (FRH) model \citep{mechkov_2015} for a price $S$ can be written as $S_t:= s\exp((r-q)t + N_t)$ where $N$ is the L\'evy process with characteristic exponent
\begin{equation}\label{eq:nig-exponent}
    \zeta_N(p) = - \frac{\sigma\rho}{\gamma}ip + \frac{1}{\gamma^2}\left(1 - \sqrt{(1 - i\sigma\rho\gamma p)^2 + (\sigma\gamma)^2(p^2 + ip)}\right),
\end{equation}
i.e.\ with characteristic function $\phi_N(p, t) := \mathbb{E}[e^{ip N_t}]=e^{\zeta_N(p)t}$.
This makes $N$ a normal-inverse Gaussian (NIG) L\'evy process with drift such that $\phi_N(-i,t)=\mathbb{E}[e^{N_t}]=1$.
The corresponding `standard' NIG distribution / process parameters $\alpha,\beta,\delta,\mu$ are given in \citet{mechkov_2015}, but we stress the relative value of this parameterisation in volatility modelling, given that $\sigma,\rho,\gamma$ preserve their interpretation from the Heston model.

Using the NIG process's representation by subordination \citep{barndorff-nielsen_1997}, we can construct (a c\`adl\`ag version of) this model in terms of $W$ by using
\begin{equation}\label{eq:nig-mean-variance}
     N_t = \sigma\sqrt{1-\rho^2} W^0_{X_t} + \frac{\sigma\rho}{\gamma}(X_t - t) - \frac12\sigma^2  X_t,\quad  X_t := \inf\left\{x > 0 : x - \gamma W^1_x > t\right\}
\end{equation}
where we recognise $X$ as the inverse Gaussian (IG) L\'evy process,
with characteristic exponent
\begin{equation}\label{eq:ig-exponent}
    \zeta_X(p) := \frac{1}{\gamma^2}\left(1 - \sqrt{1 - 2\gamma^2ip}\right).
\end{equation}
It turns out that this model for $S$ can be written exactly as we have done in \autoref{eq:heston-random-ode}, i.e.\
\begin{equation}\label{eq:fast-reversion-heston}
    S_t = s\exp\left((r - q)t + \sigma W^\rho_{X_t} - \frac12 \sigma^2 X_t \right),
\end{equation}
where $X$ is instead the IG L\'evy process here.
This can be obtained after noting that $X$ satisfies $X_t - t = \gamma W^1_{X_t}$.
Now the main result of \citet{mechkov_2015} was to establish that the marginal distributions of the Heston model parameterised as in \autoref{eq:mechkov-heston} converge to those of this FRH model as $a\to\infty$.
A key insight from \citet{mccrickerd_2021} was the understanding that this originates from a deeper connection between the integrated Cox-Ingersoll-Ross process $\int_0^t V_udu$ and IG L\'evy process.
This connection is summarised in the Epilogue of \citet{mccrickerd_2021}, has been studied also in \citet{abijaber_2024}, and applied effectively in \citet{abijaber_2025}.

\section{Random closed sets}\label{app:random-closed-sets}

This appendix provides an introduction to random closed sets, which is the natural framework in which to define the FEH model and establish it as a Heston model limit.
This introduction comes almost exclusively from Molchanov's text (\citeyear{molchanov_2017}), with some probability-free topics from \citet{beer_1993}.
As will become clear, the presentation here has applications to derivative pricing in mind.

In general the realisations of random closed sets are subsets of an underlying topological space; the `carrier' space, denoted here by $\bb{G}$.\footnote{
    The carrier space is denoted by $\bb{E}$ in \citet{molchanov_2017}.
    We change notation like this to avoid some horrendous clashes.
}
In \autoref{sec:fast-excursion-heston} this carrier space is our price-time domain, the Euclidean space $\bb{G}:=\bb{R}\times[0,\infty)$, with time zero representing `now'.
Consider the realisations or graphs in \autoref{fig:pathwise-convergence}.
With $\bb{G}$ defined as such, or as any $\bb{R}^d$, we can take for granted that $\bb{G}$ is \emph{locally compact Hausdorff second countable} (LCHS); see \citet{molchanov_2017} Chapter 1.
This facilitates a streamlined presentation towards the relevant Effros $\sigma$-algebra $\cc{G}$, on which our model's distribution is defined.

The family of \emph{all} such closed subsets of $\bb{G}$ is denoted by $\rr{G}$, for graphs.
For the avoidance of doubt, $\rr{G}$ \emph{does} contain $\varnothing$.
Such notation introduced in this appendix is summarised in \autoref{tab:app-notation}.

\begin{table}[ht]
\begin{center}
\begin{tabular}{ll}
 Symbol & Description \\ [0.5ex]
 \hline
 $\bb{G}$ & Price-time domain $\bb{R}\times[0,\infty)$ \\
 $\rr{G}$ & Family of closed subsets of $\bb{G}$ \\
 $\Gamma$ & An element of $\rr{G}$; a graph \\
 $\cc{S}$ & Sub-base of the Fell topology on $\rr{G}$ \\
 $\cc{T}$ & Fell topology on $\rr{G}$ \\
 $d_\text{AW}$ & Attouch-Wets metric on $\rr{G}\setminus\varnothing$ \\
 $\cc{G}$ & Effros $\sigma$-algebra on $\rr{G}$ \\
 $\Sigma$ & Random closed set on $(\rr{G},\cc{G})$ \\
 $\Pi_\Sigma$ & Capacity functional of $\Sigma$
\end{tabular}
\end{center}
\caption{Main notation introduced in \autoref{app:random-closed-sets}, most of which is adopted in \autoref{sec:fast-excursion-heston}.}
\label{tab:app-notation}
\end{table}

This introduction also serves the case for using random closed sets.
Besides topological robustness, ideal for limit theorems, this is largely based around the enlargements of spaces $\rr{C}\to\rr{D}\to\rr{E}\to\rr{F}\to\dots$ sketched out in \citet{whitt_2002}.\footnote{See in particular the summary at the end of \citet{whitt_2002} Section 15.3.}
To be clear, our space $\rr{G}$ of graphs \emph{does not} reside at the end of this chain. $\rr{G}$ is rather an alternative to Whitt's excursion space $\rr{E}$,
with a fraction of its elements being conveniently induced by those of his fluctuation space $\rr{F}$, precisely as in \autoref{def:fast-excursion-heston}.

\newparahead{Hit-and-miss families}
Given a subset $\Phi\subset\bb{G}$, not necessarily closed and therefore not necessarily \emph{in} $\rr{G}$,
various topologies \emph{on} $\rr{G}$ are built from the following subfamilies of graphs $\Gamma\in\rr{G}$,
\begin{equation*}
    \rr{G}_\Phi:= \{\Gamma\in\rr{G} : \Gamma\cap \Phi \neq \varnothing \},\quad \rr{G}^\Phi:= \{\Gamma\in\rr{G} : \Gamma\cap \Phi = \varnothing \}.
\end{equation*}
These are called the hit and miss families of $\Phi$ in $\rr{G}$ respectively, for fairly obvious reasons.

It is remarkable that from this outset we can draw connections with derivative pricing.
E.g.\ we can think of it as being the job of our model to make graphs $\Gamma$ like in \autoref{fig:pathwise-convergence}, and the job of a derivative to define a payoff region $\Phi$.
Then e.g.\ the hit subfamily $\rr{G}_\Phi\subset\rr{G}$ contains the graphs $\Gamma$ for which this derivative, represented by $\Phi$, is in-the-money.
See how this works in \autoref{eq:price-representations}.

\newparahead{Fell topology}
Various so-called hit-and-miss topologies can be constructed on $\rr{G}$ by generating them from hit-and-miss families.
Following \citet{molchanov_2017}, we focus on the Fell topology $\cc{T}$, which is specifically generated from the sub-base $\cc{S}$ of \emph{hit} families $\rr{G}_\Phi$ for \emph{open} $\Phi$ and \emph{miss} families $\rr{G}^\Phi$ for \emph{compact} $\Phi$.\footnote{
    Recall this means $\cc{T}$ is the smallest topology containing $\cc{S}$ as open sets, constructed by first taking all \emph{finite} intersections of $\cc{S}$ (forming a base) then all unions (forming a topology).
The specific combinations of hit-and-miss families that generate hit-and-miss topologies like Fell's is beyond us.
    However see \citet{beer_1993} which considers the finer Vietoris hit-and-miss topology as the prototype, using \emph{closed} miss families in place of compact in \autoref{eq:fell-sub-base}.
}
Writing $\tau(\cc{S})$ for the topology generated by sub-base $\cc{S}$, this concretely means
\begin{equation}\label{eq:fell-sub-base}
    \cc{T}:=\tau(\cc{S}),\quad \cc{S}:=\ \big\{\ \rr{G}_\Phi:\Phi\ \text{open}\ \big\}\ \cup\ \big\{\rr{G}^\Phi:\Phi\ \text{compact}\big\}.
\end{equation}

As discussed in \autoref{sec:fast-excursion-heston}, the Fell topology $\cc{T}$ on $\rr{G}$ is tremendously robust.
It is e.g.\ \emph{compact} (despite $\bb{G}$ not being) and metrisable.\footnote{
    Comparables are Skorokhod's $\rr{M}_2$ topology on $\rr{D}$, especially constructions over $[0,\infty)$ as in \citet{billingsley_1999}, or Whitt's counterpart on $\rr{E}$, as in \citet{whitt_2002}.
These \emph{incomplete} spaces contain `special realisations' in $\rr{G}$, as elegantly put in \citet{molchanov_2017}.
}
The Fell topology on the subset $\rr{G}':=\rr{G}\setminus\varnothing$ is instead `just' Polish and metrisable.
See \citet{beer_1993} Theorem 5.1.5, which consolidates these properties.

Since topologies like $\cc{T}$ are constructed on a set like $\rr{G}$ of \emph{subsets} in an underlying space like $\bb{G}$, they are called `hyperspace' topologies, with properties of $\bb{G}$ having direct consequences for $\cc{T}$.

\newparahead{Attouch-Wets metric} This Fell topology on $\rr{G}'$ is specifically induced by the \emph{Attouch-Wets} metric, due to $\bb{G}$ having the Heine-Borel property; see \citet{beer_1993} Exercise 5.1.10.
This proves convenient for establishing convergence, as shown in \autoref{thm:fast-excursion-limit}.
For $\Gamma_1, \Gamma_2\in\rr{G}'$, this metric is defined by
\[
    d_\rr{AW}(\Gamma_1,\Gamma_2) := \sum_{n=1}^\infty \frac{1}{2^n} \left(1\wedge \sup_{a\in B_n} \left|d(a, \Gamma_1) - d(a, \Gamma_2)\right|\right),
\]
where $B_n:=\{a\in\bb{G}:\Vert a\Vert<n\}$ are balls of radius $n$, and $d(a,\Gamma):=\inf_{b\in\Gamma}d(a,b)$ is the Euclidean distance from point to set.

The Attouch-Wets and Hausdorff metrics are evidently closely related, and induce the same (hyperspace) topologies for \emph{bounded} carrier spaces $\bb{G}$.
See \citet{beer_1993} Chapter 3 for a thorough comparison.

\newparahead{Effros $\boldsymbol{\sigma}$-algebra} In general, the Effros $\sigma$-algebra $\cc{G}$ is defined as that generated from just the \emph{open} hitting sets in \autoref{eq:fell-sub-base}.
However, with our carrier space $\bb{G}$ being LCHS, \emph{$\cc{G}$ coincides with the Borel $\sigma$-algebra $\cc{B}(\cc{T})$ generated by the Fell topology $\cc{T}$ on $\rr{G}$}.
See \citet{molchanov_2017} Section 1.3.3.

The Fell topology itself is not especially helpful in \autoref{sec:fast-excursion-heston}, so we prefer the short Effros description of
$\cc{G}$.
This Effros-Borel coincidence $\cc{G}=\cc{B}(\cc{T})$ should not be forgotten, however, because combined with the metrisability of $\cc{T}$, this brings our space $(\rr{G},\cc{G})$ into the comfort zone of \citet{billingsley_1999}.

\newparahead{Random closed sets} The cleanest definition of a random closed set is simply a random element of the space $(\rr{G},\cc{G})$ in the familiar sense of \citet{billingsley_1999}.
That is, \emph{a random closed set is a measurable map}:
\[
    \Sigma:(\Omega,\cc{F},\bb{P})\to (\rr{G},\cc{G}),
\]
which ensures a well-defined distribution $\mu_\Sigma:=\bb{P}\Sigma^{-1}$ on $(\rr{G},\cc{G})$.
This definition is equivalent to \citet{molchanov_2017} Definition 1.1.1'' for any LCHS carrier space $\bb{G}$, because of the resulting coincidence $\cc{G}=\cc{B}(\cc{T})$.\footnote{
    This random closed set definition for LCHS $\bb{G}$ therefore also coincides with the less familiar Definition 1.1.1' and Definition 1.1.1 from \citet{molchanov_2017}.
And in our specific case $\bb{G}=\bb{R}\times[0,\infty)$, we should think of $\Sigma$ rather as a \emph{random graph} with trajectories $\Gamma$ like in \autoref{fig:pathwise-convergence}.
But since `graph' has many meanings, \emph{random closed set in $\rr{G}$} is preferred.
}

The next few parts of this appendix clarify how derivatives can be priced in this framework of random closed sets, and how other models (e.g.\ It\^o SDE or L\'evy-based) can be lifted into it.

\newparahead{Digital options}
Some of the simplest (and most liquid) `digital' derivatives pay $1$ (say USD) if a price satisfies certain conditions, and $0$ otherwise.\footnote{
    By price here, we really mean a set of prices as in \autoref{fig:pathwise-convergence}, usually limited by a finite time horizon $T<\infty$.
}
E.g.\ a digital put option pays $1$ if a price is below a strike $K$ at a \emph{specific} time $T$, and a (down-and-in) single-touch option pays $1$ if a price falls below $K$ at any time \emph{before} $T$.
Both of these examples are considered in \autoref{sec:touch-pricing}.

Such conditions (thus derivatives) may be represented by hitting sets $\Phi$, with the associated payoff being represented through events like $\{\Sigma\cap \Phi\neq\varnothing\}=\{\Sigma\in\rr{G}_\Phi\}$.\footnote{
    E.g.\ see how the specific case of $\Phi:=[0,K]\times[0,T]\subset\bb{G}$ appears in \autoref{eq:price-representations} for a single-touch option.
}
Concretely, under a model $\Sigma$, the \emph{random} payoff $H$ and \emph{expected} payoff $\Pi$ (both in $\bb{R}$) are given respectively by
\begin{equation}\label{eq:capacity}
    H_\Sigma(\Phi) = \mathbbm{1}_{\Sigma\cap \Phi\neq\varnothing},\quad \Pi_\Sigma(\Phi) = \bb{E}[H_\Sigma(\Phi)] = \bb{P}[\Sigma\cap \Phi\neq\varnothing].
\end{equation}

Given how closely related derivative pricing is to expected payoffs (typically), we can think of $\Pi$ as a pricing function for such options.\footnote{
    We are neglecting matters relating to payoff timings and associated discounting.
In \autoref{sec:touch-pricing} we specifically consider \emph{forward} prices, to preserve connections between prices and probabilities, making \autoref{fig:touches} more meaningful.
}
\autoref{sec:touch-pricing} clarifies how \emph{other} derivatives can be priced, but our specific agenda there is to provide formulas that are consistent with a Heston model limit.

More generally, the \emph{fundamental measurability theorem} brings reassurance in abundance regarding the potential of random closed sets for derivative pricing.
See \citet{molchanov_2017} Theorem 1.3.3.

\newparahead{Capacity functionals} Constraining $\Phi$ to compacts only, the `digital option pricing' map $\Pi_\Sigma$ from \autoref{eq:capacity} is called the \emph{capacity functional} of $\Sigma$, and Choquet's theorem tells us that these \emph{characterise} random closed sets since $\bb{G}$ is LCHS. See \citet{molchanov_2017} Definition 1.1.17 and Theorem 1.1.29.

In more practical terms: \emph{digital option prices $\Pi(\Phi)$ for hitting sets $\Phi$ characterise models}.
So if we had market prices for all such options, then either these would define an admissible capacity functional, and therefore uniquely identify a model (which could then be used for a range of other tasks), or they would not be admissible and e.g.\ arbitrage may be identifiable as a result.\footnote{
    See Choquet's theorem regarding admissibility; $\Pi$ must be upper semicontinuous and completely alternating.
}

It is worth noting that while capacity functionals $\Pi_\Sigma$ look like probability measures, the two coincide iff $\Sigma$ is a random singleton, but otherwise $\Pi_\Sigma$ is non-additive.
See \citet{molchanov_2017} Proposition 1.1.30.

\newparahead{Processes} Clearly we can think of a random closed set $\Sigma$ on our price-time domain $\bb{G}=\bb{R}\times[0,\infty)$ as a (closed) set-valued stochastic process $\{\Sigma_t\}_{t\in[0,\infty)}$ of prices evolving over time.
\citet{molchanov_2017} Chapter 5 is dedicated to such links between random sets and stochastic processes, see Section 5.1.2 especially.

We avoid details here because for us thinking of $\Sigma$ as a \emph{set}-valued process is merely a convenience that helps to clarify relationships with the familiar \emph{real}-valued counterparts of \autoref{app:background-models}.\footnote{
    E.g.\ we will not rely on properties like joint measurability or martingality of processes as in \citet{molchanov_2017} Chapter 5.
}
This will be abundantly clear in our application to derivative pricing, e.g.\ in \autoref{eq:price-representations} we give three helpful representations of a single-touch option price, none of which feature a set-valued process.

\newparahead{Selections}
When clarifying process relationships, the concept of a \emph{selection process} is important.
In our setting, a \emph{real}-valued process $S=\{S_t\}_{t\in[0,\infty)}$ is a selection process of our \emph{compact interval}-valued process $\Sigma=\{\Sigma_t\}_{t\in[0,\infty)}$ if $S_t\in\Sigma_t$ a.s.\ for each $t\in[0,\infty)$.
See \citet{molchanov_2017} Definition 5.1.22.

The next two parts summarise how we will project from our model $\Sigma$ onto \emph{several} real-valued selection processes (used in \autoref{sec:ohlc-processes} especially), and how converse lift operations work also.

\newparahead{Projections}
Viewing a random closed set $\Sigma$ as a set-valued process $\{\Sigma_t\}_{t\in[0,\infty)}$ as above can be understood as a family of projections onto time-slices of $\Sigma$.
See related \citet{molchanov_2017} Proposition 1.3.26.

For our application in \autoref{sec:touch-pricing}, we never need the \emph{whole} slice $\Sigma_t$ of prices; we are just interested in certain extremes.
E.g.\ we will consider a `high' price process $H=\{H_t\}_{t\in[0,\infty)}$ with $H_t:=\max\Sigma_t$, and a `close' process $C_t=\{C_t\}_{t\in[0,\infty)}$ arising via right limits in time, satisfying $\Sigma_{t+}:=\lim_{T\downarrow t}\Sigma_T=\{C_t\}$.
Both of these are convenient \emph{real}-valued projections, but neither is guaranteed to exist.\footnote{
    These processes beg the question of whether random closed sets can be avoided, focussing on `feature vectors' of real processes instead, see \citet{hess_1999}.
A process like $H$ being simpler than $\Sigma$ is illusory, however, indeed $(\rr{G}, \cc{G})$ is a natural state space for (a lift of) $H$.
When possible, focussing on \emph{parametric representations} would seem more fruitful.
}

\newparahead{Lifts}
Converse to projections, we can lift compatible real processes into $(\rr{G},\cc{G})$ through the concept of completed graphs, which are fundamental for the Skorokhod topologies.
See \citet{whitt_2002} Section 3.3.

For a \emph{continuous} process $S=\{S_t\}_{t\in[0,\infty)}$, the lift is given trivially by $\Sigma:=\{(S_t,t):t\in[0,\infty)\}$ as in \autoref{thm:fast-excursion-limit}, which is equivalent to setting random \emph{singletons} $\Sigma_t=\{S_t\}$.
Contrasting this, our process $H=\{H_t\}_{t\in[0,\infty)}$ from \autoref{sec:ohlc-processes} has both right and left limits but is neither c\`adl\`ag nor c\`agl\`ad.
In this case the lift of $H$ into $(\rr{G},\cc{G})$ is obtained by setting $\Sigma_t:=[H_{t-}\wedge H_{t+},H_t]$.\footnote{
    While it is not obvious that the resulting process $\Sigma=\{\Sigma_t\}_{t\in[0,\infty)}$ has \emph{closed} trajectories $\Gamma\in\rr{G}$ a.s., it is straightforward to show this by flooring the price component $Z$ in the parametric representation from \autoref{eq:feh-fluctuation}.
}

\newparahead{Concluding remarks}
This appendix provides an introduction to the theory of random closed sets, with full focus on topics that we rely on, relating to stochastic processes in finance, their limits and applications to derivative prices.
Many other topics will be found in \citet{molchanov_2017}.
Perhaps the Choquet and Aumann integrals, or set-valued martingales were the most narrowly avoided here.

\section{Select proofs}\label{app:proofs}

This appendix proves two convergence results.
The first \autoref{lem:attouch-wets} establishes continuity of a map from price-time parametric representations like $(Z,Y)$ to their graphs w.r.t.\ the Attouch-Wets metric.
This is relied upon in the proof of \autoref{thm:fast-excursion-limit}, but is also used to obtain convergence of our FEH model simulation scheme provided in \autoref{app:simulation}.
The second \autoref{lem:ode-convergence} is very niche.
This is only relied upon for convergence of our Heston random ODE simulation scheme, which is purpose-built for \emph{visualisation} of \autoref{thm:fast-excursion-limit}, as in \autoref{fig:pathwise-convergence}.

\newparahead{Attouch-Wets convergence}
In \autoref{thm:fast-excursion-limit} we pass from the convergence $(\hat Z^a, \hat Y^a)\to(\hat Z,\hat Y)$ of parametric representations to the convergence $\hat \Sigma^a\to\hat \Sigma$ of induced random closed sets.
This passage is justified by the following probability-free result, which applies to our FEH model a.s.

\begin{lemma}\label{lem:attouch-wets}
    Suppose that each $\Gamma_n\in\rr{G}$ is the graph of continuous parametric representation $\gamma_n=(\sigma_n,\tau_n)$ with $\tau_n(0)=0$, $\tau_n$ non-decreasing and unbounded, i.e.
    \[
    \Gamma_n = \{\gamma_n(x)\in\bb{G}:x\in[0,\infty)\}.
    \]
    If $\gamma_n\to\gamma_0$ uniformly over compacts as $n\to\infty$, then $d_\rr{AW}(\Gamma_n,\Gamma_0)\to 0$, and therefore $\Gamma_n\to\Gamma_0$ on the Fell topology of $\rr{G}$ as well.
\end{lemma}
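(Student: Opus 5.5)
The plan is to verify the Attouch-Wets convergence directly from its definition, one ball at a time. Since $d_\rr{AW}(\Gamma_n,\Gamma_0)$ is a series whose $m$-th term is bounded by $2^{-m}$, dominated convergence for series shows it suffices to prove, for each fixed $m$,
\[
\sup_{a\in B_m}\left|d(a,\Gamma_n)-d(a,\Gamma_0)\right|\to 0 \quad\text{as } n\to\infty.
\]
Once this holds, the conclusion about the Fell topology follows immediately. Each $\Gamma_n$ is nonempty, hence lies in $\rr{G}'$, and the Attouch-Wets metric induces the Fell topology there, as recalled in \autoref{app:random-closed-sets}. Convergence in $\rr{G}'$ then gives convergence in $\rr{G}$, because $\rr{G}'$ carries the subspace topology of $\rr{G}$.

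The key step is a localisation in the parameter $x$: I will show that for $a\in B_m$ the infimum defining $d(a,\Gamma_n)$ only involves $x$ in a fixed compact $[0,R]$, uniformly in $n$ large.

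First, an upper bound. We have $d(a,\Gamma_n)\le \|a-\gamma_n(0)\|\le m+\|\gamma_n(0)\|$. Since $\gamma_n(0)\to\gamma_0(0)$, this is at most $D:=m+\|\gamma_0(0)\|+1$ for all large $n$, and the same bound holds for $\Gamma_0$.

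Next, which points can matter. Any point $\gamma_n(x)=(\sigma_n(x),\tau_n(x))$ within distance $D+1$ of $a$ has time coordinate $\tau_n(x)\le \|a\|+D+1< m+D+1$. So only parameters with small $\tau_n(x)$ can contribute to the infimum.

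Now I choose $R$ using that $\tau_0$ is unbounded and non-decreasing: pick $R$ with $\tau_0(R)>m+D+2$. Uniform convergence on $[0,R]$ gives $\tau_n(R)>m+D+1$ for large $n$. Because each $\tau_n$ is non-decreasing, every $x>R$ then has $\tau_n(x)>m+D+1$. Such points lie farther than $D+1$ from $a$, so they cannot approach the infimum. Hence, for $a\in B_m$ and all large $n$,
\[
d(a,\Gamma_n)=\min_{x\in[0,R]}\|a-\gamma_n(x)\|,
\]
and likewise for $n=0$.

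With both distances written as minima over the same compact $[0,R]$, the elementary estimate $|\min f-\min g|\le\sup|f-g|$ applies. Combined with the reverse triangle inequality, it gives
\[
\sup_{a\in B_m}\left|d(a,\Gamma_n)-d(a,\Gamma_0)\right|\le\sup_{x\in[0,R]}\|\gamma_n(x)-\gamma_0(x)\|.
\]
The right-hand side tends to $0$ by the assumed uniform convergence on compacts, which finishes the argument.

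The main obstacle is the localisation step. It is exactly where the hypotheses that $\tau_n$ is non-decreasing and $\tau_0$ is unbounded are needed: without monotonicity, a curve could return close to $B_m$ at arbitrarily large parameter values, and uniform convergence on compacts would not control such excursions. The choice of $R$ must therefore be made from $\tau_0$ and transferred to $\tau_n$ via convergence at the single point $R$, together with monotonicity.
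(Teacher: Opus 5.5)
Your proposal is correct and follows essentially the same route as the paper. Both arguments localise the parameter to a compact interval using the monotonicity and unboundedness of $\tau_0$, transfer this to $\tau_n$ via convergence at a single point (and at $0$), and then conclude from uniform convergence on that interval. The differences are cosmetic: you use the estimate $|\min f-\min g|\le\sup|f-g|$ where the paper applies the triangle inequality at a minimiser and argues by symmetry, and you spell out the reduction to a single ball and the passage from $\rr{G}'$ to $\rr{G}$, which the paper leaves implicit.
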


\begin{proof} To obtain $d_\rr{AW}(\Gamma_n,\Gamma_0)\to 0$ it suffices to show that for any radius $k>0$ and error $\varepsilon>0$, we can find $N>0$ such that
\begin{equation}\label{eq:aw-criterion}
\sup_{a\in B_k}\left|d(a, \Gamma_n) - d(a, \Gamma_0)\right| < \varepsilon\quad \forall n>N,
\end{equation}
and convergence on the Fell topology as stated follows immediately.
Now fix some $x_*$ such that $\tau_0(x_*) > 2k + |\sigma_0(0)|$.
Then $d(a, \Gamma_0)\le d(a,\gamma_0(0))<d(a,\gamma_0(x))$ for all $x>x_*$, so we obtain
\begin{equation}\label{eq:nearest-point}
d(a, \Gamma_0) = \min_{x\in[0,x_*]}d(a, \gamma_0(x))=: d(a,\gamma_0(x_0)),
\end{equation}
where continuity guarantees some $x_0\in[0,x_*]$, not necessarily unique.
Using $\sigma_n(0)\to\sigma_0(0)$ and $\tau_n(x_*)\to \tau_0(x_*)$, we can take $n$ up such that \autoref{eq:nearest-point} holds for each $\Gamma_n$, $\gamma_n$ and $x_n\in[0,x_*]$ for all $n$ sufficiently high.
The triangle inequality and assuming $\Vert \gamma_0 - \gamma_n\Vert_{[0,x_*]}<\varepsilon$ then provides
\[
d(a,\Gamma_n) \le d(a,\gamma_n(x_0)) \le d(a,\gamma_0(x_0)) + d(\gamma_0(x_0), \gamma_n(x_0)) <  d(a,\Gamma_0) + \varepsilon.
\]
By symmetry $d(a,\Gamma_0) < d(a,\Gamma_n) + \varepsilon$ for all $n$ high enough too.
These arguments apply uniformly over $a\in B_k$, providing the claim.
\end{proof}

\newparahead{ODE solution convergence} With the next result we clarify the a.s.\ convergence $Y^\varepsilon\to Y$ as $\varepsilon\to0$ uniformly over compacts, as used in \autoref{app:simulation}, specifically relating to the `inverse' ODEs in \autoref{eq:inverse-ode}.
We do this by considering a single Brownian path $w_1:=W^1(\omega)$,\footnote{
    We just need $w_1$ continuous over $[0,\infty)$ with $w_1(0)=0$ and $y(x):=\max_{x'\in[0,x]}\{x' - \gamma w_1(x')\}\to\infty$ as $x\to\infty$.
}
and studying the following probability-free Heston ODE in place of the random counterpart from \autoref{eq:heston-random-ode}
\begin{equation*}
    x' = f(t, x):= v + a\left(t - x + \gamma w_1(x) \right),\quad x(0)=0.
\end{equation*}

\begin{lemma}\label{lem:ode-convergence}
    Assume $f\in\rr{C}(\bb{R}^2,\bb{R})$, each $f(\cdot,x)$ is strictly increasing and $f(0,0)>0$, and let $\varphi\in\rr{C}^1_0([0,\infty),\bb{R})$ be the unique solution of the IVP $x'=f(t,x)$, $x(0)=0$.
Define $f_\varepsilon:= f\vee \varepsilon$ for each $\varepsilon > 0$.
Then the IVPs $x'=f_\varepsilon(t,x), x(0)=0$ also have unique solutions, $\varphi_\varepsilon\in\rr{C}^1_0([0,\infty),\bb{R})$.
Moreover, we have $\varphi_\varepsilon\to \varphi$ and therefore also $\varphi^{-1}_\varepsilon\to \varphi^{-1}$ uniformly over compacts as $\varepsilon\to 0$.
\end{lemma}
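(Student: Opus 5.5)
The argument runs in three stages: existence and uniqueness for the truncated IVPs, monotonicity of the family $\varphi_\varepsilon$ in $\varepsilon$, and identification of the monotone limit.

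\textbf{Existence and uniqueness.} Each $f_\varepsilon=f\vee\varepsilon$ inherits the key structural properties of $f$. It is continuous. Each $f_\varepsilon(\cdot,x)$ is non-decreasing; strict monotonicity fails only where $f\le\varepsilon$. It satisfies $f_\varepsilon\ge\varepsilon>0$.

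For existence, Peano's theorem gives local solutions. Since $f_\varepsilon(t,x)\ge\varepsilon$, any solution is strictly increasing with $\varphi_\varepsilon'\ge\varepsilon$, so it cannot blow down. To rule out blow-up, I would compare with $\varphi$ itself, or bound $f(t,x)$ on compacts; the intended application $f=v+a(t-x+\gamma w_1(x))$ has at most linear-type growth in $t$ and decreases in $x$ up to the bounded $w_1$ term. So I would either argue generally via monotonicity or note that solutions stay in a region where $f$ is bounded.

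For uniqueness, positivity of $f_\varepsilon$ lets me pass to the inverse ODE. Writing $t=\psi(x)$, any solution is a strictly increasing $C^1$ bijection, and its inverse solves
\[
\psi'(x)=1/f_\varepsilon(\psi(x),x).
\]
The right-hand side is \emph{non-increasing} in $\psi$. A one-sided Lipschitz (dissipative) argument then gives uniqueness for the inverse: the difference of two solutions satisfies $(\psi_1-\psi_2)(\psi_1'-\psi_2')\le 0$, so $(\psi_1-\psi_2)^2$ is non-increasing from $0$. Uniqueness of $\varphi_\varepsilon$ follows. This mirrors the mechanism quoted from \citet{mccrickerd_2021} for the original ODE.

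\textbf{Monotonicity.} Next I would show $\varphi_\varepsilon\ge\varphi$ and that $\varphi_\varepsilon$ is monotone in $\varepsilon$. It is cleanest to do this in inverse form: $\psi_\varepsilon:=\varphi_\varepsilon^{-1}$ solves $\psi'=g_\varepsilon(\psi,x):=1/f_\varepsilon(\psi,x)$, and $g_\varepsilon\le g_{\varepsilon'}$ for $\varepsilon\ge\varepsilon'$. The comparison principle for ODEs with non-increasing right-hand side in the state variable is valid by the same dissipativity argument. It gives $\psi_\varepsilon\le\psi_{\varepsilon'}\le\psi$ wherever $\psi:=\varphi^{-1}$ is defined.

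\textbf{Identification of the limit.} As $\varepsilon\downarrow0$, the monotone limit $\bar\psi:=\lim\psi_\varepsilon\le\psi$ exists. The main obstacle is the degeneracy of $1/f_\varepsilon\to1/f$ where $f$ is near $0$, since $\psi$ may have infinite slope there. My first attempt would be to work instead with $\varphi_\varepsilon$ in integral form, $\varphi_\varepsilon(t)=\int_0^t f_\varepsilon(s,\varphi_\varepsilon(s))\,ds$. The $\varphi_\varepsilon$ are monotone in $\varepsilon$ (decreasing as $\varepsilon\downarrow0$) and bounded below by $\varphi$. On compacts they are equicontinuous, because $f_\varepsilon\le f\vee1$ is bounded on the relevant compact set.

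Arzelà–Ascoli together with monotonicity then gives uniform convergence on compacts to some $\bar\varphi$. Since $f_\varepsilon\to f$ uniformly on compacts, dominated convergence in the integral equation shows that $\bar\varphi$ solves $x'=f(t,x)$, $x(0)=0$. Uniqueness of $\varphi$ (assumed in the statement) then forces $\bar\varphi=\varphi$.

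\textbf{Inverses.} Finally, convergence of the inverses follows because all functions involved are continuous, strictly increasing and unbounded. Uniform convergence on compacts of strictly increasing continuous bijections of $[0,\infty)$ implies the same for their inverses, by the elementary argument: $|\psi_\varepsilon(x)-\psi(x)|>\delta$ would force $|\varphi_\varepsilon-\varphi|$ to exceed the modulus $\min\{\varphi(\psi(x)+\delta)-x,\;x-\varphi(\psi(x)-\delta)\}$, which is bounded below uniformly on compacts by strict monotonicity.
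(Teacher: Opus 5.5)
There is a genuine gap in your identification step. You use that $f_\varepsilon=f\vee\varepsilon$ converges to $f$, but it converges uniformly on compacts to $f_0:=f\vee 0$. Consequently your Arzel\`a--Ascoli limit $\bar\varphi$ only satisfies $\bar\varphi(t)=\int_0^t f_0(s,\bar\varphi(s))\,ds$, and the assumed uniqueness of $\varphi$ for $x'=f(t,x)$ says nothing about this new IVP. Its vector field vanishes on the open set $\{f<0\}$, so it is neither positive nor strictly increasing in $t$ there. Hence neither your inverse-ODE dissipativity argument nor Wend's theorem covers it. A priori $\bar\varphi$ could enter $\{f<0\}$ and stall there, where an $f$-solution would have to decrease, and your bound $\bar\varphi\ge\varphi$ does not exclude this.

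The missing ingredient, which is the heart of the paper's proof (adapting Lemma 2.6 of \citet{mccrickerd_2021}), is a barrier bound. Every solution of $x'=f_0(t,x)$, $x(0)=0$, satisfies $0\le x(t)\le\overline\varphi(t):=\inf\{y>0:f(t,y)<0\}$. To see this, suppose $x(t)>\overline\varphi(t)$ and pick $y^*<x(t)$ with $f(t,y^*)<0$. Strict monotonicity in $t$ and continuity give $f<0$, hence $f_0\equiv0$, on a strip $[0,t]\times[y^*-\delta,y^*+\delta]$. On a time interval over which the solution crosses this strip, the mean value theorem then yields a point with $x'>0=f_0$, a contradiction.

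Since $f\ge0$ below $\overline\varphi$, every $f_0$-solution is an $f$-solution and therefore equals $\varphi$; in particular $\bar\varphi=\varphi$. The paper then simply invokes continuous dependence for the now well-posed family, so your monotonicity in $\varepsilon$ becomes optional. The same bound applied to $\varphi$ gives $\varphi'=f(t,\varphi)\ge0$. Because $f(\cdot,c)$ cannot vanish on an interval, $\varphi$ is then strictly increasing. You use this without proof when you write $\psi=\varphi^{-1}$ and in the inverse step.

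Separately, your existence argument does not work as written. Comparing with $\varphi$ cannot prevent blow-up of $\varphi_\varepsilon$, since it bounds $\varphi_\varepsilon$ only from below. Likewise, your uniform bound $\varphi_\varepsilon\le\varphi_1$ on $[0,T]$ presupposes that $\varphi_1$ exists on $[0,T]$. Truncation removes exactly the barriers that can keep $\varphi$ bounded. For example, take $f(t,x)=1+\arctan t+F(x)$ with $F=0$ below $1$, $F=-10$ on a band, and $F(x)=x^2$ for large $x$: then $\varphi$ is global, but every $\varphi_\varepsilon$ crosses the band and explodes. So an upper growth bound is genuinely needed. Your fallback is the right one: in the Heston case it comes from the at most linear growth of $f$ along Brownian paths.
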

\begin{proof}
    Firstly $\varphi$ is unique by \citet{mccrickerd_2021} Theorem 2.17.
The proof of that result is readily adapted for each $f_\varepsilon$.
But since $f_\varepsilon$ is strictly positive, we can use Wend's theorem instead, giving us uniqueness of $\varphi_\varepsilon$.
See \citet{mccrickerd_2021} Theorem 2.14.
For the continuity claim $\varphi_\varepsilon\to \varphi$, we consider also the case of $\varepsilon=0$ and will show that $\varphi_\varepsilon\to\varphi_0= \varphi$.
    So we now focus on this harder case of $\varepsilon=0$.
Recall that \citet{mccrickerd_2021} Lemma 2.6 gives us the bounds $0\le \varphi\le\overline\varphi$ where $\overline\varphi$ is the c\`adl\`ag path defined by
    \[
    \overline\varphi(t) := \inf\{x>0: f(t,x)<0\}.
    \]
    The proof of that result does not really need modifying to obtain $0\le \varphi_0\le\overline\varphi$ also.
The mean-value theorem just leads us to the milder contradiction of $f_0(t,\varphi_0(t))=0<\varphi'_0(t)$, compared with their $f(t,\varphi(t))<0<\varphi'(t)$.
This bounds $\varphi$ and $\varphi_0$ into a region where $f$ and $f_0$ agree, so $\varphi_0=\varphi$, and $\varphi_0$ is unique too.
Since all the IVPs mentioned are thus well-posed, we get a continuity result like \citet{mccrickerd_2021} Theorem 2.18.
I.e.\ the claim of $\varphi_\varepsilon\to\varphi_0=\varphi$ follows from $f_\varepsilon\to f_0$ uniformly over compacts as $\varepsilon\to0$. $\varphi_\varepsilon^{-1}\to\varphi_0^{-1}=\varphi^{-1}$ then follows uniformly over compacts as $\varepsilon\to0$ by continuity.
\end{proof}

\newpara Recall that in \autoref{app:simulation}, $Y$ is the \emph{inverse} of Heston random ODE solution $X$.
So this convergence $\varphi_\varepsilon^{-1}\to\varphi^{-1}$ from \autoref{lem:ode-convergence} gives us $Y^\varepsilon\to Y$ as $\varepsilon\to0$ a.s., as required by using \autoref{eq:inverse-ode}.

\section{Simulation details}\label{app:simulation}

In this appendix we clarify the simulation schemes used for the trajectories in e.g.\ \autoref{fig:pathwise-convergence}.
The FEH model scheme is straightforward compared with our \emph{classical} Heston scheme, and so, ironically, most of the space here is dedicated to the latter, which just serves visualisation purposes.

By design, and like the scheme of \citet{abijaber_2025}, our classical scheme is stable under Mechkov's fast-reversion limit.
There is however a critical difference, highlighted by the development and persistence of excursions in \autoref{fig:pathwise-convergence} as $a\to\infty$, and the resulting convergence of our scheme to trajectories of the FEH model, precisely as \autoref{thm:fast-excursion-limit} says.

This is achieved through the use of random ODEs and price-time parametric representations, which work together to give us an adaptive simulation grid over time, or fixed grid in cumulative variance space.
We believe these techniques are novel, but our specific goal here is visualisation; we have few reasons to believe these can be useful in practice, e.g.\ compared with extensions of \citet{abijaber_2025}.

The following schemes are implemented in the simulate method of \href{https://github.com/rmcrkd/fast-excursion-limit/blob/main/fast_excursion_limit/fast_excursion_heston.py}{FastExcursionHeston} and \href{https://github.com/rmcrkd/fast-excursion-limit/blob/main/fast_excursion_limit/heston_random_ode.py}{HestonRandomODE} respectively.

\newparahead{Fast-excursion Heston}
To simulate the FEH model as in \autoref{def:fast-excursion-heston}, we utilise the price-time parametric representation from \autoref{eq:feh-fluctuation} which defines it, i.e.\
\begin{equation*}
     Z_x = s\exp\bigg((r - q) Y_x +  \sigma W^\rho_x - \frac12 \sigma^2x\bigg),\quad
     Y_x = \max\bigg\{x' - \gamma W^1_{x'}: x'\in[0, x]\bigg\}.
\end{equation*}
Then when it comes to pricing derivatives based on this, we leverage expressions like the following from \autoref{eq:price-representations} for a single-touch option
\[\bb{P}\bigg[\min\left\{Z_x:Y_x\le T\right\} \le K \bigg].\footnote{Having essentially truncated trajectories here at maturity $T$, convergence of the simulation scheme and prices relies on stochastic continuity, i.e.\ there being zero probability of an excursion at the time $T$, i.e.\ $Y^{-1}_{T-}=Y^{-1}_T$ a.s.}\]

It is straightforward to simulate the Brownian motion $(W^0,W^1)$ on a fixed grid $x_n=x_{n,\delta}:= n\delta$ for step size $\delta>0$ and $n\in\bb{N}_0$, with convergence being uniform over compacts as $\delta\to0$.
Convergence of $Y$ and $Z$ follow similarly, by continuity of the maps involved.
Indeed $Y$ (and thus $Z$) can be simulated exactly at these points $x_n$ through the use of Brownian bridges, and we do not need to stipulate interpolation schemes between these (closed sets of) price-time points $(Z_{x_n},Y_{x_n})$, in order to get a.s.\ convergence w.r.t.\ the Attouch-Wets distance and on the Fell topology of $\rr{G}$.

\newparahead{Heston random ODE}
Our starting point here is the following time-changed Heston price process $S$ from \autoref{sec:heston-random-ode}, written explicitly in terms of a (well-posed) random ODE solution $X$,
\[
    S_t = s\exp\left((r - q)t + \sigma W^\rho_{X_t} - \frac12 \sigma^2 X_t \right),\quad \frac{d X_t}{dt} = a\left(t - X_t + \gamma W^1_{X_t}\right) + v =: f(t, X_t),\quad X_0=0.
\]

There will be issues with simulating $S$ on a fixed time grid as $a\to\infty$, as we will inevitably lose information in composed processes like $W^\rho_X$ as $X$ `develops jumps'.
We therefore consider working with a fixed \emph{space} grid for the inverse $X^{-1}$.
Implicitly, this will give us a \emph{random} time grid for $X$ that naturally adapts to its gradient, preventing such information loss, even as $a\to\infty$.

In general, such implicit random time grids are generated by working with a fixed grid on some process $Y$ and parametric representation $(S_{Y}, Y)$ of $S$.
In our case we choose the strictly increasing process $Y:= X^{-1}$, meaning that we consider simulating $(Z, Y)$ where
\[
    Z_x := S_{Y_x} = s\exp\left((r - q)Y_x + \sigma W^\rho_x - \frac12 \sigma^2 x \right).
\]
From such a parametric representation $(Z, Y)$, we recover standard Heston prices using $S=Z_X$.

With a view to working with (approximations of) $Y$ and $Z$, we note that this process $Y$ solves the random ODE
$Y'_x = \frac{1}{f(Y_x, x)}$ whenever $Y'$ exists.\footnote{
    This follows from the inverse function theorem and Lebesgue's theorem.
    See \citet{mccrickerd_2021} Lemma 2.16.
}
Clearly there may be issues trying to work with such an ODE directly, as $f$ approaches zero.
To sidestep this, we define $f_\varepsilon:= f\vee\varepsilon$ for $\varepsilon>0$, let $X^\varepsilon$ solve ODEs driven by $f_\varepsilon$ and define $Y^\varepsilon:=(X^\varepsilon)^{-1}$.
This process solves the simpler problem,
\begin{equation}\label{eq:inverse-ode}
    \frac{d Y^\varepsilon_x}{dx} = \frac{1}{f_\varepsilon(Y^\varepsilon_x, x)},\quad Y^\varepsilon_0=0.
\end{equation}
Making effective use of this random ODE rests on a convergence result as $\varepsilon\to0$, see \autoref{lem:ode-convergence}.

Now we are going to utilise a simple forward Euler simulation scheme for this process $Y^\varepsilon$, on a fixed \emph{space} grid $x_n=x_{n,\delta}:= n\delta$ for the spatial step size $\delta>0$ and $n\in\bb{N}_0$.
We thus define an approximating polygonal process $Y^{\varepsilon, \delta}$ on the points $x_n$ iteratively by
\[
    Y^{\varepsilon, \delta}_{x_{n + 1}} := Y^{\varepsilon, \delta}_{x_n} + \frac{\delta}{f_\varepsilon(Y^{\varepsilon, \delta}_{x_n}, x_n)}
\]
and utilise linear interpolation between.\footnote{So we can write the full process as $ Y^{\varepsilon,\delta}_x = Y^{\varepsilon,\delta}_{x_n} + (x - x_n)/f_\varepsilon(Y^{\varepsilon,\delta}_{x_n}, x_n)$ for $x\in[x_n, x_{n+1})$.}
Now we get the a.s.\ convergence $Y^{\varepsilon, \delta}\to Y^\varepsilon$ uniformly over compacts as $\delta\to0$, like in \citet{mccrickerd_2021} Theorem 2.20.

We similarly define $Z^{\varepsilon,\delta}$ to be the log-linearly interpolating polygonal process between the points
\[
    Z^{\varepsilon,\delta}_{x_n} := s\exp\left((r - q)Y^{\varepsilon, \delta}_{x_n} + \sigma W^\rho_{x_n} - \frac12 \sigma^2 x_n \right).
\]
The fact that this simulation scheme only relies on the Brownian motion $W=(W^0,W^1)$ at the fixed points $x_n$ should not be taken for granted, simplifying implementation and convergence.

So we have defined a parametric polygonal process $(Z^{\varepsilon,\delta}, Y^{\varepsilon,\delta})$, and from this we can recover a polygonal price process $S^{\varepsilon, \delta}:= Z^{\varepsilon,\delta}_{X^{\varepsilon,\delta}}$ where of course $X^{\varepsilon,\delta}:= (Y^{\varepsilon,\delta})^{-1}$.
These processes $Y^{\varepsilon,\delta}$, $Z^{\varepsilon,\delta}$ and $S^{\varepsilon, \delta}$ all a.s.\ converge uniformly over compacts to $Y^{\varepsilon}$, $Z^{\varepsilon}$ and $S^{\varepsilon}$ as $\delta\to 0$, and these in turn converge similarly to $Y$, $Z$ and $S$ as $\varepsilon\to 0$.
So e.g.\ for any continuous and bounded $\Phi,\Psi$ into $\bb{R}$,
\[
\lim_{\varepsilon\to 0}\lim_{\delta\to 0} \left|\Psi\big(\mathbb{E}\big[\Phi\big(S\big)\big]\big) - \Psi\big(\mathbb{E}\big[\Phi\big(S^{\varepsilon,\delta}\big)\big]\big)\right| = 0.
\]
Practically, this means that for any tolerance on derivative prices (or functions thereof), we can always find some $\varepsilon>0$ that will allow us to achieve this tolerance by reducing the step size $\delta$.

\end{appendix}

\end{document}